\numberwithin{equation}{section}
\newtheorem{Theorem}{Theorem}[section]
\newtheorem{Corollary}[Theorem]{Corollary}
\newtheorem{Lemma}[Theorem]{Lemma}
\newtheorem{Proposition}[Theorem]{Proposition}
 { \theoremstyle{definition}
\newtheorem{Remark}[Theorem]{Remark} }
\renewcommand{\d}{\mathrm d}
\newcommand{\pa}{\partial}
\def \pa{\partial}
\def\C{{\mathbb C}}
\def \YY{\mathbf Y}
\def \gg{\mathbf g}
\def\PP{\bm{P}}
\def\hPP{\bm{\hat{P}}}
\def\WW{\bm{W}}
\def\FF{\bm{F}}
\def\Id{\bm{I}}
\def\YY{\bm{Y}}
\def\TT{\bm{T}}
\def\BB{\bm{B}}
\def\QQ{\bm{Q}}
\def\HH{\bm{H}}
\def\LL{\bm{L}}
\def\gg{\bm{\gamma}}
\def\kk{\bm{\kappa}}
\def\cC{\mathcal{C}}
\def\Ppsi{\bm{\Psi}}
\def\RR{\bm{R}}
\def\AA{\bm{A}}
\def\cAA{\bm{\mathcal{A}}}
\def\BB{\bm{B}}
\def\cBB{\bm{\mathcal{B}}}
\def\UU{\bm{\mathcal{U}}}
\def\SS{\bm{S}}
\def\aa{\bm{a}}
\def\bb{\bm{b}}
\def\pp{\bm{p}}
\def\qq{\bm{q}}
\def\aalpha{\bm{\alpha}}
\def\bbeta{\bm{\beta}}
\def\ssigma{\bm{\sigma}}
\def\CC{\bm{\mathcal C}}
\def\DD{\bm{\mathcal D}}
\begin{document}

\allowdisplaybreaks

\newcommand{\arXivNumber}{1801.08740}

\renewcommand{\thefootnote}{}

\renewcommand{\PaperNumber}{076}

\FirstPageHeading

\ShortArticleName{Toda and Painlev\'e Systems Associated with Semiclassical MVOPs of Laguerre Type}

\ArticleName{The Toda and Painlev\'e Systems Associated\\ with Semiclassical Matrix-Valued Orthogonal\\ Polynomials of Laguerre Type\footnote{This paper is a~contribution to the Special Issue on Painlev\'e Equations and Applications in Memory of Andrei Kapaev. The full collection is available at \href{https://www.emis.de/journals/SIGMA/Kapaev.html}{https://www.emis.de/journals/SIGMA/Kapaev.html}}}

\Author{Mattia CAFASSO~$^\dag$ and Manuel D.~DE LA IGLESIA~$^\ddag$}

\AuthorNameForHeading{M.~Cafasso and M.D.~de la Iglesia}

\Address{$^\dag$~LAREMA - Universit\'e d'Angers, 2 Boulevard Lavoisier, 49045 Angers, France}
\EmailD{\href{mailto:cafasso@math.univ-angers.fr}{cafasso@math.univ-angers.fr}}
\URLaddressD{\url{https://sites.google.com/site/mattiacafasso/}}

\Address{$^\ddag$~Instituto de Matem\'aticas, Universidad Nacional Aut\'onoma de M\'exico,\\
\hphantom{$^\ddag$}~Circuito Exterior, C.U., 04510, Mexico City, Mexico}
\EmailD{\href{mailto:mdi29@im.unam.mx}{mdi29@im.unam.mx}}
\URLaddressD{\url{http://www.matem.unam.mx/mdi29/}}

\ArticleDates{Received March 28, 2018, in final form July 16, 2018; Published online July 21, 2018}

\Abstract{Consider the Laguerre polynomials and deform them by the introduction in the measure of an exponential singularity at zero. In [Chen Y., Its A., \textit{J.~Approx. Theo\-ry} \textbf{162} (2010), 270--297] the authors proved that this deformation can be described by systems of differential/difference equations for the corresponding recursion coefficients and that these equations, ultimately, are equivalent to the Painlev\'e~III equation and its B\"acklund/Schlesinger transformations. Here we prove that an analogue result holds for some kind of semiclassical matrix-valued orthogonal polynomials of Laguerre type.}

\Keywords{Painlev\'e equations; Toda lattices; Riemann--Hilbert problems; matrix-valued orthogonal polynomials.}

\Classification{34M56; 35Q15; 37J35; 42C05}

\renewcommand{\thefootnote}{\arabic{footnote}}
\setcounter{footnote}{0}

%\vspace{-3mm}

\section{Introduction}

The relation between integrable systems and orthogonal polynomials goes back to the seminar paper of Moser \cite{Moser}. There, the author made the remarkable observation that, given a family of orthogonal polynomials associated to a measure $w$ on the real axis, the related tridiagonal Jacobi matrix can be interpreted as a Lax matrix for the Toda equations, where the independent variable of equations plays the role of deformation parameters for the measure $w$. Starting from the nineties, this result and various generalizations (to orthogonal polynomials on the unit circle, to multiple orthogonal polynomials and many other types of special polynomials) played a central role in the field of random matrices, and the study of the equations satisfied by orthogonal polynomials found applications to 2D quantum gravity \cite{FIK} and to the computations of gap probabilities and partition functions associated to random models, both in the discrete and continuous setting (see, for instance, \cite{AvMV} and \cite{Joh}).

The scope of this paper is to work on the relation between \emph{matrix-valued} orthogonal polynomials and \emph{non-commutative} integrable equations. This is a relatively new field, in which few examples have been worked out. In~\cite{M1}, it had been shown that matrix-valued orthogonal polynomials (on the real line) satisfy a non-commutative version of the Toda equations, and in \cite{Caf2} one of the authors provided an analogue result for matrix-valued orthogonal polynomials on the unit circle, upon replacing Toda equations with the (multi-component) Ablowitz--Ladik hierarchy (a~well known integrable discretisation of the non-linear Schr\"odinger one). In both cases, the main tool had been the theory of quasi-determinants (Schur complements), which were known to be related to non-commutative integrable systems since the pioneering work of Etingof, Gelfand and Retakh~\cite{EGR}. Schur complements had been used also in the series of papers~\cite{AFM,AM1,AM2} and in~\cite{RR}, where a non-commutative version of the Painlev\'e II equation was introduced, and studied later on in \cite{BertolaCafasso} using Riemann--Hilbert techniques.

In 2011, independently in \cite{CassaM} and \cite{OPRH}, it was established a general theory to study matrix-valued orthogonal polynomials through Riemann--Hilbert problems, thus extending the well known result of Fokas, Its and Kitaev~\cite{FIK} from the scalar to the matrix case. This theory, in our opinion, provides one of the best ways to deduce, in a uniform way, non-linear equations related to matrix-valued orthogonal polynomials, using the well known technique of reducing the Riemann--Hilbert problem to a simpler one, where the jumps are constant. In~\cite{CassaM}, this technique has been used to deduce a matrix version of the discrete Painlev\'e~I equation, while in~\cite{CafassodelaIglesia} we proved that the Christoffel--Darboux kernel associated to certain matrix-valued Hermite polynomials satisfies a non-commutative version of the Painlev\'e~IV equation.

The goal of this paper is to give another instance of this relation between non-commutative Painlev\'e equations and matrix-valued orthogonal polynomials. The results we present are to be thought as a non-commutative analogue of the results in~\cite{ChenIts}. There, the authors studied the (scalar) orthogonal polynomials associated to the measure
$w(x,s) := x^\alpha {\rm e}^{-x - s/x}$ on~$\mathbb R_+$, which is a~deformation of the usual Laguerre measure, where the deformation is induced by the parameter~$s$ and changes the behaviour of the measure at zero. Using two different approaches (ladder operators and Riemann--Hilbert techniques) Chen and Its proved that some quantities~$a_n(s)$ and~$b_n(s)$, which ultimately can be expressed through the recursion coefficients of the orthogonal polynomials, satisfy a differential and a difference system, which can be reduced, respectively, to the Painlev\'e~III equation and a discrete analogue of it, which is conjectured to be a composition of the basic Schlesinger transformations of Painlev\'e~III.

The non-commutative analogue of the difference and differential systems for $a_n$ and $b_n$ are given here in Theorems \ref{discreteThm} and \ref{continuousThm}. The derivation, which is based upon the Riemann--Hilbert method established in \cite{OPRH}, is not a straightforward generalisation of the one for the scalar case: namely one has to push a little bit further the analysis of the behavior of the solution of the Riemann--Hilbert problem at the two singular points~$0$ and $\infty$. Also, while the systems of $a_n$ and $b_n$ were of first-order, here our two systems are of second-order, which is a phenomenon that, in some way, we already observed in the previous paper~\cite{CafassodelaIglesia} (and also, to some extent, in~\cite{BertolaCafasso} for the case of Painlev\'e XXXIV).

The paper is organized as follows: in the first section we define the matrix-valued orthogonal polynomials we want to study, construct the related Riemann--Hilbert problem and the corresponding Lax triple~\eqref{Laxtriple}. In the second section, the compatibility conditions of the Lax triples are computed, together with some additional relations that, in the scalar case, are deduced from the compatibility conditions but here they have to be computed in a different way (see Proposition~\ref{propg}). Finally, in the third section, our main results, Theorems~\ref{discreteThm} and~\ref{continuousThm}, are stated and deduced, with straightforward (but sometimes lengthy) computations. The last section gives some additional relations holding for a special class of matrix-valued orthogonal polynomials that were introduced in~\cite{DG1}.

\section{From the Riemann--Hilbert problem to the Lax system}

Consider the following $N\times N$ weight matrix
\begin{gather*}
 \WW(s;x) := x^\alpha{\rm e}^{-x - s/x}\TT(x)\TT^*(x),\qquad x\in[0,\infty),\qquad \alpha>0,\qquad s>0.
\end{gather*}
Observe that $\WW(s;x)$ is a weight matrix of Laguerre type perturbed by a multiplicative factor~${\rm e}^{-s/x}$, which induces an infinitely strong zero at the origin. For simplicity, the matrix-valued function $\TT(x)$ is chosen such that
\begin{gather}\label{Tcondition}
 \big(\partial_x \TT(x)\big)\TT^{-1}(x) = \frac{\BB}x,
\end{gather}
where $\BB$ is an arbitrary $N\times N$ constant matrix, independent of $s$, such that all moments of $\WW(s;x)$ are finite (entrywise). Observe that in this case, the solution of~\eqref{Tcondition} is given by $\TT(x)=x^{\BB}={\rm e}^{\BB\log x}$.

Given such a weight we construct (if existing) a sequence of \emph{monic} matrix-valued orthogonal polynomials $\{\hPP_n(s;x)\}_{n \geq 0}$ which is uniquely characterized by these two conditions:
\begin{itemize}\itemsep=0pt
\item $\hPP_n(s;x)$ is monic of order $n$:
\begin{gather*}
\hPP_n(s;x) = x^n \Id_N + \cdots,
\end{gather*}
\item For any $n,m \geq 0$,
\begin{gather*}
\int_{0}^\infty \hPP_n(s;x) \WW(s;x) \hPP^*_m(s;x) \d x = \gg_n^{-1}(s)\delta_{n,m},
\end{gather*}
where $\gg_n(s)$ is the inverse of the $n^{th}$ (matrix-valued) norm of our family of matrix-valued orthogonal polynomials. Note that, for any $n$, $\gg_n(s)$ is a Hermitian matrix.
\end{itemize}

As in the scalar case, the existence of the sequence of matrix-valued orthogonal polyno\-mials can be equivalently restated as the existence of the solution to a certain (matrix-valued) Riemann--Hilbert boundary value problem, as it has been proven in \cite{OPRH}. This is a general result that does not depend on the particular form of the measure $\WW(s;x)$ and we make reference to~\cite{OPRH} for its general formulation. Applied to our case, we have the following proposition:
\begin{Proposition}\label{propLax} The sequence of matrix-valued orthogonal polynomials $\{ \hPP_n(s;x)\}_{n \geq 0}$ exists if and only if, for any $n \geq 0$, it exists a block matrix-valued function $\YY^{(n)}(s;z)$ which is analytic on $\C\setminus{[0,\infty)}$ and such that the following three conditions are satisfied:
 \begin{itemize}\itemsep=0pt
 \item The boundary values $\YY_+^{(n)}$, $\YY_-^{(n)}$ on the $($standardly oriented$)$ contour $(0,\infty)$ satisfy the following jump condition:
 \begin{gather}\label{jumpcondition}
 \YY_+^{(n)}(s;x) = \YY_-^{(n)}(s;x) \left( \begin{matrix} \Id_N & x^\alpha{\rm e}^{-x - s/x}\TT(x)\TT^*(x) \\
 0 & \Id_N \end{matrix} \right), \qquad x \in (0,\infty).
 \end{gather}
 \item At infinity the solution $\YY^{(n)}$ has the following asymptotic behaviour:
 \begin{gather}\label{asympinfty}
 \YY^{(n)}(s;z) = \left( \Id_{2N} + \sum_{k = 1}^\infty \frac{\YY_{-k}^{(n)}(s)}{z^{k}} \right) \left( \begin{matrix} z^n\Id_N & 0\\
 0 & z^{-n}\Id_N \end{matrix}\right), \qquad z\rightarrow \infty.
 \end{gather}
 \item At the point $0$ the solution $\YY^{(n)}$ is non--singular:
 \begin{gather}\label{asymp0}
 \YY^{(n)}(s;z) = \QQ^{(n)}(s) \left( \Id_{2N} + \sum_{k = 1}^\infty \YY^{(n)}_k(s) z^k \right), \qquad z\rightarrow 0.
 \end{gather}
 \end{itemize}
\end{Proposition}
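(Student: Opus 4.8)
The plan is to follow the matrix-valued Fokas--Its--Kitaev correspondence of \cite{OPRH}: assuming the MVOPs exist I would write down an explicit candidate for $\YY^{(n)}$ and verify the three conditions, and conversely uniqueness of the solution lets me read the polynomials off the first column. Concretely, for $n\geq 1$ I would set
\begin{gather*}
\YY^{(n)}(s;z) = \left(\begin{matrix} \hPP_n(s;z) & \frac{1}{2\pi{\rm i}}\int_0^\infty \frac{\hPP_n(s;x)\WW(s;x)}{x-z}\,\d x \\ -2\pi{\rm i}\,\gg_{n-1}(s)\hPP_{n-1}(s;z) & -\gg_{n-1}(s)\int_0^\infty \frac{\hPP_{n-1}(s;x)\WW(s;x)}{x-z}\,\d x \end{matrix}\right),
\end{gather*}
with the usual convention fixing the bottom row when $n=0$, where $\hPP_{n-1}$ is the $(n-1)$-st monic MVOP and $\gg_{n-1}$ its inverse norm. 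Each entry is analytic on $\C\setminus[0,\infty)$: the left column because $\hPP_n,\hPP_{n-1}$ are entire, the right column because the Cauchy transforms are analytic off the contour.

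Next I would check the jump \eqref{jumpcondition}. Since $\hPP_n,\hPP_{n-1}$ carry no jump, the left columns of $\YY_+^{(n)}$ and $\YY_-^{(n)}$ agree. For the right column I would apply the Plemelj--Sokhotski formula: the additive jump of the two Cauchy transforms across $(0,\infty)$ equals the left column times $\WW$, namely $\hPP_n(s;x)\WW(s;x)$ and $-2\pi{\rm i}\,\gg_{n-1}(s)\hPP_{n-1}(s;x)\WW(s;x)$, which is exactly the off-diagonal contribution prescribed by the product $\YY_-^{(n)}\cdot(\text{jump matrix})$. To obtain \eqref{asympinfty} I would expand each Cauchy transform as $-\frac{1}{2\pi{\rm i}}\sum_{k\geq 0}z^{-k-1}\int_0^\infty \hPP_n(s;x)\WW(s;x)x^k\,\d x$ (and similarly with $\hPP_{n-1}$) and invoke orthogonality: the moments $\int_0^\infty \hPP_n\WW\,x^k\,\d x$ vanish for $k<n$ and equal $\gg_n^{-1}$ at $k=n$. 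This forces the top-right block to be $O\big(z^{-n-1}\big)$ and the bottom-right block to be $\big(\Id_N+O\big(z^{-1}\big)\big)z^{-n}$, the latter precisely fixing the normalization constant $-2\pi{\rm i}\,\gg_{n-1}$; the degrees and leading coefficients of $\hPP_n,\hPP_{n-1}$ give the remaining two blocks, so \eqref{asympinfty} holds.

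The step I expect to be the main obstacle, and the one genuinely specific to this weight, is the local behaviour at the origin \eqref{asymp0}. In the unperturbed Laguerre case the factor $x^\alpha$ would force a nontrivial branch-type local structure there; here the essential singularity ${\rm e}^{-s/x}$ makes $\WW(s;x)$ and all its derivatives vanish super-exponentially as $x\to 0^+$ (for $s>0$), so the integrands $\hPP_n(s;x)\WW(s;x)/(x-z)$ are integrable and depend analytically on $z$ in a full neighbourhood of $0$. Hence both Cauchy transforms extend analytically across $z=0$ and $\YY^{(n)}$ is analytic and single-valued there; writing $\QQ^{(n)}(s):=\YY^{(n)}(s;0)$ and Taylor expanding gives \eqref{asymp0}. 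I would still need $\QQ^{(n)}$ invertible: as usual $\det\YY^{(n)}$ has no jump (the jump matrix is unipotent), tends to $1$ at infinity by \eqref{asympinfty}, and is analytic at $0$ by the above, hence $\det\YY^{(n)}\equiv 1$ by Liouville, so $\QQ^{(n)}$ is invertible. The same determinant/Liouville argument yields uniqueness of $\YY^{(n)}$, hence the converse direction: from any solution of \eqref{jumpcondition}--\eqref{asymp0} the first column is entire and, by \eqref{asympinfty}, a monic matrix polynomial of degree $n$, while the decay of the second column encodes the vanishing moments, so the top-left block is the desired $\hPP_n$. The general equivalence is covered by \cite{OPRH}; the only point requiring care is thus the regularity \eqref{asymp0} produced by the infinitely strong zero of $\WW$ at the origin.
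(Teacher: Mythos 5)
Your candidate solution is exactly the paper's formula \eqref{solutionRHP}, and your verification of analyticity, of the jump \eqref{jumpcondition} via Plemelj--Sokhotski, and of the expansion \eqref{asympinfty} via the vanishing moments $\int_0^\infty \hPP_n\WW x^k\,\d x=0$ ($k<n$) and the normalizing moment $\gg_n^{-1}$ is correct; this is also all the paper does, since it simply cites \cite{OPRH} and exhibits \eqref{solutionRHP}. The trouble is concentrated precisely in the step you yourself flag as the crux.

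The claim that the Cauchy transforms ``extend analytically across $z=0$'', so that $\YY^{(n)}$ is analytic and single-valued in a neighbourhood of the origin, is false. By the very Plemelj formula you used for the jump, $\cC(\hPP_n\WW)_+ - \cC(\hPP_n\WW)_- = \hPP_n\WW$ on $(0,\infty)$, and $\hPP_n(s;x)\WW(s;x)$ is not identically zero on $(0,\delta)$ for any $\delta>0$ (the weight is positive definite for $x>0$); an analytic extension to a disk around $0$ would make the two boundary values agree there and hence force this jump to vanish. So no solution of this problem is analytic at $0$ --- unsurprisingly, since the weight itself has an essential singularity there. What is true, and what \eqref{asymp0} actually asserts, is a complete \emph{asymptotic} expansion as $z\to 0$ in $\C\setminus[0,\infty)$: because ${\rm e}^{-s/x}$ makes every inverse moment $\int_0^\infty x^{-k-1}\hPP_n(s;x)\WW(s;x)\,\d x$ finite, inserting $\frac{1}{x-z}=\sum_{k=0}^{K}\frac{z^k}{x^{k+1}}+\frac{z^{K+1}}{x^{K+1}(x-z)}$ gives $\cC(\hPP_n\WW)(s;z)=\sum_{k=0}^{K}c_k z^k+O\big(z^{K+1}\big)$ for every $K$, with the remainder controlled uniformly up to the cut because the density vanishes at the origin faster than any power of $x$. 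This yields \eqref{asymp0} with $\QQ^{(n)}(s)=\lim_{z\to 0}\YY^{(n)}(s;z)$, the series being asymptotic rather than a convergent Taylor series. Your determinant/Liouville argument (invertibility of $\QQ^{(n)}$, uniqueness, and the converse direction) must be repaired accordingly: $\det\YY^{(n)}$ continues analytically across $(0,\infty)$ because the jump matrix is unipotent, and it is \emph{bounded} near $0$ by the expansion just established, so the singularity at $0$ is removable --- not ``analytic at $0$ by the above''. With these corrections your argument closes; as written, the mechanism behind the key step is wrong, even though the statement it is meant to justify is true.
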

Indeed, one can write the solution of the Riemann--Hilbert problem above in terms of matrix-valued orthogonal polynomials\footnote{While $\hPP_n(s;x)$ is, strictly speaking, defined on the positive real axis, below we denote with $\hPP_n(s;z)$ its analytic continuation on the complex plane. Also, we adopt the convention that for any matrix-valued function~$\bm P(z)$, we have that $\bm P^*(z):=\left(\bm P(\bar z)\right)^*$.} as follows
\begin{gather}\label{solutionRHP}
 \YY^{(n)}(s;z) = \left( \begin{matrix} \hPP_n(s;z) & \cC(\hPP_n\WW)(s;z)\\
 -2\pi i \gg_{n-1}(s)\hPP_{n-1}(s;z) & -2\pi i \gg_{n-1}(s) \cC(\hPP_{n-1}\WW)(s;z)
 \end{matrix}\right).
\end{gather}
Here we denoted with $\cC$ the Cauchy transform, applied to (possibly) matrix-valued func\-tions $\FF(x)$ in such a way that
\begin{gather*}
 \cC(\FF)(z) := \frac{1}{2 \pi i} \int_{0}^\infty \frac{\FF(x)}{x - z}\d x.
\end{gather*}
We can normalize our polynomials in the following way
\begin{gather*}
 \PP_n(s;x) = \kk_n(s)\hPP_n(s;x), \qquad \mbox{with} \quad \gg_{n}(s) = \kk^*_n(s)\kk_n(s).
\end{gather*}

As it is customary, in order to derive differential and difference equations from our family of matrix-valued orthogonal polynomials we will reduce the Riemann--Hilbert problem \eqref{jumpcondition}--\eqref{asymp0} to a one with \emph{constant} jumps for $\Ppsi^{(n)}(s;z) := \YY^{(n)}(s;z) \RR(s;z)$, with $\RR(s;z)$ to be defined. Then $\Ppsi^{(n)}(s;z)$ will satisfy a Lax system of three equations, whose coefficients will depend on the entries of $\YY^{(n)}_{-1}(s)$ and $\QQ^{(n)}(s)$ defined below. The two propositions and the corollary below express $\YY^{(n)}_{-1}(s)$ and $\QQ^{(n)}(s)$ in function of meaningful quantities for the corresponding matrix-valued orthogonal polynomials. In the following, we denote
\begin{gather*}
 \hPP_n(s;x) = x^n\Id_N + \sum_{j = 0}^{n-1}\aa_{n,j}(s)x^j.
\end{gather*}

\begin{Proposition}[{\cite[Corollary 2.12]{OPRH}}]\label{prop1} All coefficients $\aa_{n,j}(s)$ of the monic matrix-valued orthogonal polynomials are real matrices. Additionally we have
 \begin{gather}
 \YY^{(n)}_{-1}(s) = \left( \begin{matrix}
 \aa_{n,n-1}(s) & -\dfrac{1}{2 \pi i} \gg_n^{-1}(s) \vspace{1mm}\\
 -2\pi i \gg_{n-1}(s) & -\aa^*_{n,n-1}(s)
 \end{matrix}\right). \label{Y-1}
 \end{gather}
\end{Proposition}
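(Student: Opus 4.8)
The plan is to establish the two assertions independently, both by reading information off the explicit representation \eqref{solutionRHP} together with the orthogonality relations; I would treat the reality statement first, since the identification of the blocks of $\YY^{(n)}_{-1}(s)$ involves the adjoint coefficients $\aa^*_{n,n-1}(s)$.

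For the reality of the coefficients, the key point is that the weight is real and symmetric: the scalar factor $x^\alpha{\rm e}^{-x-s/x}$ is real and positive, and $\TT(x)\TT^*(x)=x^{\BB}x^{\BB^{*}}$ is real symmetric because $\BB$ is a real matrix (so $\BB^*=\BB^T$ and $x^{\BB^*}=(x^{\BB})^T$). Consequently every moment $\int_0^\infty x^k\WW(s;x)\,\d x$ is a real matrix. The coefficients are then pinned down by the orthogonality conditions $\int_0^\infty \hPP_n(s;x)\WW(s;x)x^j\,\d x=0$ for $0\le j\le n-1$, which, after inserting $\hPP_n(s;x)=x^n\Id_N+\sum_{i=0}^{n-1}\aa_{n,i}(s)x^i$, read
\begin{gather*}
\sum_{i=0}^{n-1}\aa_{n,i}(s)\int_0^\infty x^{i+j}\WW(s;x)\,\d x = -\int_0^\infty x^{n+j}\WW(s;x)\,\d x, \qquad 0\le j\le n-1.
\end{gather*}
This is a block-linear system whose block-Hankel coefficient matrix is real and invertible (it is the positive-definite block-moment matrix whose invertibility is equivalent to the existence of the family, which we assume) and whose right-hand side is real; hence its unique solution $\{\aa_{n,j}(s)\}$ consists of real matrices.

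For \eqref{Y-1} I would extract $\YY^{(n)}_{-1}(s)$ as the coefficient of $z^{-1}$ in $\YY^{(n)}(s;z)\,{\rm diag}\big(z^{-n}\Id_N, z^{n}\Id_N\big)$, block by block, from \eqref{solutionRHP}. The left column is immediate from $\hPP_m(s;z)=z^m\Id_N+\cdots$: the top-left block is the subleading coefficient $\aa_{n,n-1}(s)$, and the bottom-left block is $-2\pi i\gg_{n-1}(s)$. For the right column I would use the large-$z$ expansion of the Cauchy transform,
\begin{gather*}
\cC(\FF)(z)=-\frac{1}{2\pi i}\sum_{k\ge0}\frac{1}{z^{k+1}}\int_0^\infty x^k\FF(x)\,\d x,
\end{gather*}
together with orthogonality. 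Applied to $\FF=\hPP_n\WW$, orthogonality kills all moments below order $n$ and gives $\int_0^\infty x^n\hPP_n\WW\,\d x=\gg_n^{-1}(s)$, whence the top-right block $-\frac{1}{2\pi i}\gg_n^{-1}(s)$.

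The step I expect to be the real obstacle is the bottom-right block, which requires the order-$n$ moment $\int_0^\infty x^n\hPP_{n-1}(s;x)\WW(s;x)\,\d x$, that is, one order beyond the moment $\gg_{n-1}^{-1}(s)$ that fixes the squared norm. To compute it I would expand $x^n\Id_N$ in the basis of adjoint polynomials and match the $x^{n-1}$ coefficient, obtaining $x^n\Id_N=\hPP^*_n(s;x)-\hPP^*_{n-1}(s;x)\aa^*_{n,n-1}(s)+\cdots$; orthogonality then yields $\int_0^\infty x^n\hPP_{n-1}\WW\,\d x=-\gg_{n-1}^{-1}(s)\aa^*_{n,n-1}(s)$, so that the bottom-right block evaluates to $-\aa^*_{n,n-1}(s)$, completing the identification. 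The only genuine care needed throughout is the bookkeeping of left versus right matrix multiplication and a consistent use of the adjoint convention $\bm P^*(z)=(\bm P(\bar z))^*$ when transposing the expansion coefficients.
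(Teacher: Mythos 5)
The first thing to note is that the paper contains no proof of this proposition at all: it is imported verbatim from \cite[Corollary~2.12]{OPRH}, citation and all, so your argument can only be judged on its own merits rather than against a parallel argument in the text. Your identification of $\YY^{(n)}_{-1}(s)$ is correct and complete: extracting the $z^{-1}$ coefficient of $\YY^{(n)}(s;z)\operatorname{diag}\big(z^{-n}\Id_N,z^{n}\Id_N\big)$ from \eqref{solutionRHP}, expanding the Cauchy transform in moments, and invoking orthogonality is the standard route, and the one step that is not pure bookkeeping --- writing $x^n\Id_N=\hPP_n^*(s;x)-\hPP_{n-1}^*(s;x)\aa_{n,n-1}^*(s)+\cdots$ with the constant matrices on the \emph{right}, so that $\int_0^\infty x^n\hPP_{n-1}\WW\,\d x=-\gg_{n-1}^{-1}(s)\aa_{n,n-1}^*(s)$ and the $(2,2)$ block is $-\aa_{n,n-1}^*(s)$ --- is handled correctly. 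Note also that, as your own computation shows, \eqref{Y-1} holds for any Hermitian weight and does not use the reality statement, so the order in which you treat the two claims is immaterial.

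The gap is in the reality claim. You assert that ``$\BB$ is a real matrix'', but the paper's hypothesis is that $\BB$ is an \emph{arbitrary} constant matrix, and nothing in the text grants reality. This is not cosmetic: for complex $\BB$ the factor $\TT(x)\TT^*(x)=x^{\BB}x^{\BB^*}$ is Hermitian positive definite but not real --- for instance $\BB=\left(\begin{smallmatrix}0&i\\0&0\end{smallmatrix}\right)$ produces off-diagonal entries $\pm i\log x$ --- the moments $\mu_k$ are then Hermitian but not real, and already $\aa_{1,0}(s)=-\mu_1\mu_0^{-1}$ acquires nonzero purely imaginary off-diagonal entries, so the first assertion of the proposition is simply false for such $\BB$. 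In other words, reality of the weight (equivalently, of its moments) is a genuine additional hypothesis, silently imported through the citation of \cite[Corollary~2.12]{OPRH}; your moment-matrix argument (real symmetric positive-definite block-Hankel matrix, real right-hand side, hence real unique solution) is correct precisely under that hypothesis, and it should be stated as an assumption rather than asserted as a fact. Once reality of $\WW$ is granted, there is also a cleaner structural proof you might prefer: $\overline{\YY^{(n)}(s;\bar z)}$ satisfies the same jump \eqref{jumpcondition}, the same asymptotics \eqref{asympinfty} and the same regularity \eqref{asymp0} as $\YY^{(n)}(s;z)$, so uniqueness of the Riemann--Hilbert solution forces $\overline{\YY^{(n)}(s;\bar z)}=\YY^{(n)}(s;z)$, and reality of all the $\aa_{n,j}(s)$ (and of $\gg_n(s)$) follows by reading off the entries of \eqref{solutionRHP}.
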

For the next proposition, we introduce the quantities $\pp_n(s),\qq_n(s)$, defined by
\begin{gather}
 \pp_n(s) := \cC(\hPP_n\WW)(s;0)\hPP_n^*(s;0), \nonumber\\
 \qq_n(s) := 2\pi i \gg_{n-1}(s) \hPP_{n-1}(s;0)\hPP_n^{-1}(s;0).\label{defpq}
\end{gather}
\begin{Proposition}\label{prop2}
 The matrix $\QQ^{(n)}(s) = \YY^{(n)}(s;0)$ and its inverse $\QQ^{(-n)}(s)$ can be written as
 \begin{gather}
 \QQ^{(n)}(s) = \left( \begin{matrix}
 \Id_N &\pp_n(s) \\
 -\qq_n(s) & \Id_N + \qq_n(s)\pp^*_n(s)
 \end{matrix}\right)
 \left( \begin{matrix}
 \hPP_n(s;0) & 0 \\
 0 & \hPP_n^{-*}(s;0)
 \end{matrix}\right), \label{QQn1}\\
 \QQ^{(-n)}(s) = \left( \begin{matrix}
 \hPP_n^{-1}(s;0) & 0 \\
 0 & \hPP_n^{*}(s;0)
 \end{matrix}\right)
 \left( \begin{matrix}
 \Id_N + \pp_n(s)\qq^*_n(s) & \pp_n^*(s) \\
 -\qq_n^*(s) & \Id_N
 \end{matrix}\right). \label{QQn-1}
\end{gather}
\end{Proposition}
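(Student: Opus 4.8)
The plan is to obtain \eqref{QQn1} by evaluating the explicit solution \eqref{solutionRHP} at $z=0$, and to deduce \eqref{QQn-1} from \eqref{QQn1} by exploiting a discrete symmetry of the Riemann--Hilbert problem reflecting the Hermiticity $\WW=\WW^*$ of the weight. The backbone of both steps is that $\det\YY^{(n)}(s;z)\equiv 1$: the jump matrix in \eqref{jumpcondition} is unipotent, so $\det\YY^{(n)}$ has no jump across $(0,\infty)$; by \eqref{asympinfty} it tends to $1$ at infinity (the diagonal blocks $z^{\pm n}\Id_N$ contribute reciprocal determinants) and by \eqref{asymp0} it is bounded at the origin, whence $\det\YY^{(n)}\equiv1$ by Liouville. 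In particular $\QQ^{(n)}=\YY^{(n)}(s;0)$ is invertible, and the same Liouville argument yields uniqueness of the solution of \eqref{jumpcondition}--\eqref{asymp0}.

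For \eqref{QQn1} I would set $z=0$ in \eqref{solutionRHP}. Using the definitions \eqref{defpq} of $\pp_n,\qq_n$, the $(1,1)$, $(1,2)$ and $(2,1)$ blocks match the right-hand side of \eqref{QQn1} at once; the only nontrivial content is the $(2,2)$ block, which reduces to the bilinear identity
\begin{gather*}
-2\pi i\,\gg_{n-1}(s)\Big(\cC(\hPP_{n-1}\WW)(s;0)\hPP_n^*(s;0)+\hPP_{n-1}(s;0)\big[\cC(\hPP_n\WW)(s;0)\big]^*\Big)=\Id_N .
\end{gather*}
This is a Christoffel--Darboux-type relation at the singular point $0$, and I expect it to be the genuine crux: proving it directly from orthogonality is awkward, since the constant factor $\hPP_n^*(s;0)$ (rather than the full polynomial $\hPP_n^*(s;x)$) blocks a clean use of the orthogonality relations. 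I would therefore read it off instead from $\QQ^{(n)}\big(\QQ^{(n)}\big)^{-1}=\Id_{2N}$ once $\big(\QQ^{(n)}\big)^{-1}$ is known explicitly.

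That explicit inverse comes from the symmetry. With the block-swap matrix $\Sigma:=\left(\begin{matrix}0&\Id_N\\ \Id_N&0\end{matrix}\right)$, set $\bm K(z):=\Sigma\big(\YY^{(n)}(s;\bar z)\big)^{-*}\Sigma$, where $\bm M^*(z)=(\bm M(\bar z))^*$. Since the jump $\bm J$ in \eqref{jumpcondition} obeys $\Sigma \bm J^*\Sigma=\bm J$ and $\WW=\WW^*$, a short computation shows that $\bm K$ has exactly the jump \eqref{jumpcondition}; the asymptotics \eqref{asympinfty} are preserved because conjugation by $\Sigma$ swaps the diagonal blocks $z^{\pm n}\Id_N$, and the non-singularity \eqref{asymp0} at $0$ is inherited from the invertibility of $\QQ^{(n)}$. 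By uniqueness, $\bm K\equiv\YY^{(n)}$, and evaluating at $z=0$ gives
\begin{gather*}
\big(\QQ^{(n)}(s)\big)^{-1}=\QQ^{(-n)}(s)=\Sigma\,\big(\QQ^{(n)}(s)\big)^{*}\,\Sigma .
\end{gather*}
Reading this relation blockwise against the explicit entries of $\QQ^{(n)}$ from \eqref{solutionRHP} produces the bilinear identity above as its $(2,2)$ block (with its companions as the remaining blocks), finishing \eqref{QQn1}; substituting the factorisation \eqref{QQn1} into $\Sigma(\QQ^{(n)})^*\Sigma$ and carrying out the multiplication — conjugation by $\Sigma$ swaps the diagonal blocks $\hPP_n(s;0),\hPP_n^{-*}(s;0)$ and transposes the unipotent factor — yields exactly \eqref{QQn-1}.

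The main obstacle I anticipate is analytic rather than algebraic: every application of Liouville and of uniqueness must be justified at the essential singularity of $\WW$ at $x=0$, where ${\rm e}^{-s/x}$ creates an infinitely strong zero. Concretely, one must verify that $\YY^{(n)}$ (hence $\bm K$) is genuinely bounded and invertible as $z\to0$, so that $\det\YY^{(n)}$ and $\YY^{(n)}\bm K^{-1}$ have removable singularities there, and that the Cauchy transforms $\cC(\hPP_k\WW)(s;z)$ in \eqref{solutionRHP} extend to $z=0$ — both resting on finiteness of the negative moment $\int_0^\infty x^{-1}\WW(s;x)\,\d x$ guaranteed by the factor ${\rm e}^{-s/x}$. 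Once this regularity at the origin is secured, the remaining computations are the routine block manipulations indicated above.
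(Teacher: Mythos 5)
Your proposal is correct, but it reaches the proposition by a genuinely different route from the paper's. The paper's own proof is short and citation-based: it evaluates \eqref{solutionRHP} at $z=0$ and invokes two results of \cite{OPRH}, namely the Liouville--Ostrogradski formula \eqref{LOF} (Proposition~2.8 there) for the $(2,2)$ block of \eqref{QQn1}, and the explicit formula~(2.16) of \cite{OPRH} for the inverse $\YY^{(-n)}(s;z)$ for \eqref{QQn-1}. These are exactly the two facts you re-derive from scratch: since $\big[\cC(\hPP_n\WW)(s;0)\big]^*=-\cC(\WW\hPP_n^*)(s;0)$, your bilinear identity is precisely \eqref{LOF} evaluated at $z=0$, and your matrix $\Sigma\big(\QQ^{(n)}(s)\big)^*\Sigma$ is precisely $\YY^{(-n)}(s;0)$. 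Your mechanism, however, is different: $\det\YY^{(n)}\equiv 1$ and uniqueness by Liouville, followed by the Schwarz-type reflection $\bm K(z)=\Sigma\big(\YY^{(n)}(s;\bar z)\big)^{-*}\Sigma$, which solves the same Riemann--Hilbert problem (your checks are right: $\Sigma \bm J^*\Sigma=\bm J$ because $\WW$ is Hermitian, conjugation by $\Sigma$ restores the $z^{\pm n}\Id_N$ asymptotics, and \eqref{asymp0} is inherited from the invertibility of $\QQ^{(n)}$), whence $\bm K=\YY^{(n)}$ and $\QQ^{(-n)}=\Sigma\big(\QQ^{(n)}\big)^*\Sigma$. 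Reading $\QQ^{(n)}\,\Sigma\big(\QQ^{(n)}\big)^*\Sigma=\Id_{2N}$ blockwise against \eqref{solutionRHP} then supplies the missing $(2,2)$ identity of \eqref{QQn1} (and, as its off-diagonal blocks, the skew-Hermitianity of $\pp_n$, $\qq_n$, i.e., the paper's Corollary~\ref{cor1}), and the block multiplication you describe yields \eqref{QQn-1}. What each approach buys: the paper's proof is two lines on top of the machinery of \cite{OPRH}; yours is self-contained and exhibits both borrowed ingredients as two faces of a single symmetry of the Riemann--Hilbert problem, at the price of the analytic care at the essential singularity $x=0$ (boundedness and invertibility of $\YY^{(n)}$ there, existence of the Cauchy transforms at $z=0$), which you correctly flag and which the factor ${\rm e}^{-s/x}$ indeed renders harmless.
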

\begin{proof}The first formula \eqref{QQn1} is a direct computation of the definition of $\YY^{(n)}(s;z)$ evaluated at $z=0$ (see \eqref{solutionRHP}), the definition of $\pp_n(s), \qq_n(s)$ in \eqref{defpq} and the Liouville--Ostrogradski formula (see \cite[Proposition~2.8]{OPRH}), i.e.,
\begin{gather}\label{LOF}
2\pi i\gg_{n-1}(s)\big(\hPP_{n-1}(s;z)\cC(\WW\hPP_n^*)(s;z)-\cC(\hPP_{n-1}\WW)(s;z)\hPP_{n}^*(s;z)\big)=\Id_N,
\end{gather}
evaluated at $z=0$. The second formula \eqref{QQn-1} is a consequence of the definition of the inverse of the Riemann--Hilbert problem~\eqref{solutionRHP}, which can be found in formula~(2.16) of~\cite{OPRH}. Indeed,
\begin{gather*}
 \YY^{(-n)}(s;z) = \left( \begin{matrix} -2\pi i \cC(\WW\hPP_{n-1}^*)(s;z)\gg_{n-1}(s) & -\cC(\WW\hPP_n^*)(s;z)\\
 2\pi i \hPP_{n-1}^*(s;z)\gg_{n-1}(s) & \hPP_n^*(s;z)
 \end{matrix}\right).
\end{gather*}
Therefore $\QQ^{(-n)}(s)=\YY^{(-n)}(s;0)$ is a consequence of the definition of $\pp_n(s)$, $\qq_n(s)$ in \eqref{defpq} and the Hermitian transpose version of the Liouville--Ostrogradski formula~\eqref{LOF}.
\end{proof}

A simple but important consequence of this proposition is the corollary below.
\begin{Corollary}\label{cor1}For any $n \geq 0$ the matrices $\pp_n(s)$ and $\qq_n(s)$ in~\eqref{defpq} are skew-Hermitian. Therefore $\QQ^{(n)}(s)$ and $\QQ^{(-n)}(s)$ can be rewritten in the following simplified way
\begin{gather*}
 \QQ^{(n)}(s) = \left( \begin{matrix}
 \Id_N &\pp_n(s) \\
 -\qq_n(s) & \Id_N - \qq_n(s)\pp_n(s)
 \end{matrix}\right)
 \left( \begin{matrix}
 \hPP_n(s;0) & 0 \\
 0 & \hPP_n^{-*}(s;0)
 \end{matrix}\right), \\ % \label{QQn}\\
 \QQ^{(-n)}(s) = \left( \begin{matrix}
 \hPP_n^{-1}(s;0) & 0 \\
 0 & \hPP_n^{*}(s;0)
 \end{matrix}\right)
 \left( \begin{matrix}
 \Id_N - \pp_n(s)\qq_n(s) & -\pp_n(s) \\
 \qq_n(s) & \Id_N
 \end{matrix}\right). %\label{QQn-}
\end{gather*}
\end{Corollary}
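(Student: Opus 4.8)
The plan is to read off the two skew-Hermiticity relations directly from the identity $\QQ^{(n)}(s)\QQ^{(-n)}(s) = \Id_{2N}$, and then to obtain the simplified forms by mere substitution. This identity is available because, in Proposition~\ref{prop2}, $\QQ^{(-n)}(s) = \YY^{(-n)}(s;0)$ was computed from the \emph{explicit} inverse Riemann--Hilbert problem (formula~(2.16) of~\cite{OPRH}), and $\YY^{(-n)}(s;z)$ is the genuine matrix inverse of $\YY^{(n)}(s;z)$; in particular $\QQ^{(-n)}(s)$ is the honest inverse of $\QQ^{(n)}(s)$, not merely an algebraic rewriting of~\eqref{QQn1}. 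Keeping these two derivations logically independent is the one place where care is needed, since otherwise the argument would be circular.

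First I would substitute the factorizations~\eqref{QQn1} and~\eqref{QQn-1} into $\QQ^{(n)}(s)\QQ^{(-n)}(s) = \Id_{2N}$. Writing $\QQ^{(n)} = \bm{M}_1 \bm{D}_1$ and $\QQ^{(-n)} = \bm{D}_2 \bm{M}_2$, with $\bm{D}_1 = \mathrm{diag}\big(\hPP_n(s;0), \hPP_n^{-*}(s;0)\big)$ and $\bm{D}_2 = \mathrm{diag}\big(\hPP_n^{-1}(s;0), \hPP_n^{*}(s;0)\big)$ the diagonal factors, one sees at once that $\bm{D}_1 \bm{D}_2 = \Id_{2N}$. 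Hence the diagonal factors drop out and the identity collapses to $\bm{M}_1 \bm{M}_2 = \Id_{2N}$, where $\bm{M}_1$ and $\bm{M}_2$ are the first (unipotent-type) factors appearing in~\eqref{QQn1} and~\eqref{QQn-1} respectively.

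It then remains to expand $\bm{M}_1\bm{M}_2$ blockwise. The two diagonal blocks reduce to $\Id_N$ with no extra conditions, so all the content sits in the off-diagonal blocks. The $(1,2)$ block equals $\pp_n(s) + \pp_n^*(s)$, and its vanishing shows immediately that $\pp_n(s)$ is skew-Hermitian. Feeding $\pp_n^* = -\pp_n$ back into the $(2,1)$ block, the two cubic terms $-\qq_n\pp_n\qq_n^*$ and $-\qq_n\pp_n^*\qq_n^*$ cancel, leaving precisely $-\big(\qq_n(s)+\qq_n^*(s)\big)$, whose vanishing shows that $\qq_n(s)$ is skew-Hermitian as well.

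Finally, substituting $\pp_n^*(s) = -\pp_n(s)$ and $\qq_n^*(s) = -\qq_n(s)$ into~\eqref{QQn1} and~\eqref{QQn-1} — replacing $\Id_N + \qq_n\pp_n^*$ by $\Id_N - \qq_n\pp_n$, $\Id_N + \pp_n\qq_n^*$ by $\Id_N - \pp_n\qq_n$, and the off-diagonal entries $\pp_n^*$, $-\qq_n^*$ by $-\pp_n$, $\qq_n$ — yields the simplified expressions in the statement. I expect no real obstacle in the computation itself; the only genuinely delicate point is the one flagged above, namely ensuring that the inverse invoked is the independently constructed $\YY^{(-n)}$, so that $\QQ^{(n)}(s)\QQ^{(-n)}(s) = \Id_{2N}$ can be used without circularity. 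A head-on verification from the integral definitions~\eqref{defpq} would instead force one to invoke the Liouville--Ostrogradski identity~\eqref{LOF} at $z=0$ and would be noticeably more laborious.
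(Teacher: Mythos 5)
Your proposal is correct and takes essentially the same route as the paper's own proof: both rest on the identity $\QQ^{(n)}(s)\QQ^{(-n)}(s)=\Id_{2N}$ applied to the factorizations \eqref{QQn1} and \eqref{QQn-1}, reading off $\pp_n(s)+\pp_n^*(s)=0$ from the $(1,2)$ block and $\qq_n(s)+\qq_n^*(s)=0$ from the $(2,1)$ block. Your added points --- that the diagonal factors cancel, that the cubic terms in the $(2,1)$ block cancel once $\pp_n^*=-\pp_n$ is fed in, and that non-circularity is guaranteed because $\QQ^{(-n)}(s)=\YY^{(-n)}(s;0)$ comes from the independently constructed inverse Riemann--Hilbert solution --- are exactly the details the paper leaves implicit.
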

\begin{proof} It suffices to compute the quantity $\QQ^{(n)}(s)\QQ_n^{(-n)}(s) = \Id_N$ using the equations~\eqref{QQn1} and~\eqref{QQn-1} above, which gives $\pp_n(s)+\pp_n^{*}(s)=0$ for the (block) entry $(1,2)$ and $\qq_n(s)+\qq_n^{*}(s)=0$ for the (block) entry~$(2,1)$. Using this, $\QQ^{(-n)}(s)\QQ_n^{(n)}(s) = \Id_N$ holds immediately.
\end{proof}

We are now ready to state the main result of this Section, giving a Lax system associated to our matrix-valued orthogonal polynomials. In the following we denote by $\aalpha_n(s)$ and $\bbeta_n(s)$ the recursion coefficients associated to our family of matrix-valued orthogonal polynomials, such that, for every $n \geq 1$,
\begin{gather}\label{recoef}
 x\hPP_n(s;x) = \hPP_{n+1}(s;x) + \aalpha_n(s)\hPP_n(s;x) + \bbeta_{n}(s)\hPP_{n-1}(s;x),
\end{gather}
and we recall that we have the identity
\begin{gather}\label{betn}
 \bbeta_n(s) = \gg_n^{-1}(s)\gg_{n-1}(s).
\end{gather}
Moreover, we define
\begin{gather*}\ssigma_3 := \left( \begin{matrix}
 \Id_N & 0 \\
 0 & -\Id_N
 \end{matrix}\right).\end{gather*}
\begin{Theorem}Let $\YY^{(n)}(s;z)$ be the solution of the Riemann--Hilbert problem \eqref{jumpcondition}--\eqref{asymp0} and
 \begin{gather}\label{RRi}
 \RR(s;z) := \left( \begin{matrix}
 {\rm e}^{-\frac{1}2(z + \frac{s}z)}z^{\frac{\alpha}2}\TT(s;z) & 0 \\
 0 & {\rm e}^{\frac{1}2(z + \frac{s}{z})}z^{-\frac{\alpha}2}\TT^{-*}(s;z)
 \end{matrix}\right).
\end{gather}
Then the function
\begin{gather}\label{ssi}
 \Ppsi^{(n)}(s;z) := \YY^{(n)}(s;z)\RR(s;z)
\end{gather}
satisfies the following equations
\begin{gather}
\frac{\pa}{\pa z}\Ppsi^{(n)}(s;z) = \left(-\frac{1}2 \ssigma_3 + \frac{\AA_{-1}^{(n)}(s)}{z} + \frac{\AA_{-2}^{(n)}(s)}{z^2} \right)\Ppsi^{(n)}(s;z), \nonumber\\
\frac{\pa}{\pa s}\Ppsi^{(n)}(s;z) = -\frac{\AA_{-2}^{(n)}(s)}{sz}\Ppsi^{(n)}(s;z), \nonumber\\
\Ppsi^{(n+1)}(s;z) = \UU^{(n)}(s;z) \Ppsi^{(n)}(s;z),\label{Laxtriple}
 \end{gather}
where the matrices $\AA_{-1}^{(n)}(s),\AA_{-2}^{(n)}(s)$ and $\UU^{(n)}(s;z)$ are explicitly given, in terms of the matrix-valued orthogonal polynomials, by
\begin{gather}
\AA_{-1}^{(n)}(s) = \left(\begin{matrix}
 (n + \alpha/2)\Id_N + \BB & -\dfrac{1}{2 \pi i}\gg_n^{-1}(s)\\
 2 \pi i \gg_{n-1}(s) & -(n + \alpha/2)\Id_N - \BB^*
 \end{matrix}\right),\nonumber \\
\AA_{-2}^{(n)}(s) = \left(\begin{matrix}
 \dfrac{s}{2}(\Id_N - 2\pp_n(s)\qq_n(s)) & -s\pp_n(s) \\
 -s\qq_n(s)(\Id_N - \pp_n(s)\qq_n(s)) & -\dfrac{s}{2}(\Id_N - 2\qq_n(s)\pp_n(s))
 \end{matrix}\right), \nonumber\\
\UU^{(n)}(s;z) = \left(\begin{matrix}
 z\Id_N - \aalpha_n(s) & \dfrac{1}{2 \pi i}\gg_n^{-1}(s)\\
 -2\pi i \gg_n(s) & 0
 \end{matrix}\right). \label{A-2m}
\end{gather}
\end{Theorem}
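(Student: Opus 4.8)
The guiding principle is that the conjugating factor $\RR(s;z)$ in \eqref{RRi} is engineered precisely so that $\Ppsi^{(n)}=\YY^{(n)}\RR$ acquires a jump across $(0,\infty)$ that is \emph{constant} in both $z$ and $s$. So the first step I would carry out is to compute $\widetilde{\bm J}:=\RR_-^{-1}\bm J\RR_+$, where $\bm J$ is the jump matrix in \eqref{jumpcondition} and, writing $\RR=\mathrm{diag}(r_1,r_2)$, $\widetilde{\bm J}$ is upper triangular with diagonal entries $(r_1)_-^{-1}(r_1)_+$, $(r_2)_-^{-1}(r_2)_+$ and off-diagonal entry $(r_1)_-^{-1}\,w\,(r_2)_+$. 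Using the branch cut along $[0,\infty)$ (so that $z^{\alpha/2}$ and $\TT(z)=z^\BB$ pick up factors $e^{i\pi\alpha}$ and $e^{2\pi i\BB}$ between the two sides, with $\TT^{-*}$ handled through the convention $\bm P^*(z)=(\bm P(\bar z))^*$), one checks that the scalar exponential $e^{\mp\frac12(z+s/z)}$ is exactly what is needed: in the off-diagonal entry the two half-exponentials combine with the weight's $e^{-x-s/x}$ to give $1$, while all powers of $x$ telescope to $\Id_N$. The outcome is that $\widetilde{\bm J}$ is a constant matrix (in $x$, $s$ and $n$) built from $e^{\pm i\pi\alpha}$, $e^{-2\pi i\BB}$ and $e^{2\pi i\BB^*}$. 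This branch bookkeeping at $0$ and $\infty$ is the one genuinely delicate point, and it is the step I expect to be the main obstacle.

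Once $\widetilde{\bm J}$ is constant, the three objects
\begin{gather*}
\partial_z\Ppsi^{(n)}\big(\Ppsi^{(n)}\big)^{-1},\qquad
\partial_s\Ppsi^{(n)}\big(\Ppsi^{(n)}\big)^{-1},\qquad
\Ppsi^{(n+1)}\big(\Ppsi^{(n)}\big)^{-1}
\end{gather*}
have no jump on $(0,\infty)$: differentiating $\Ppsi_+=\Ppsi_-\widetilde{\bm J}$ and using that $\widetilde{\bm J}$ is independent of $z$ and $s$ (and of $n$, so it cancels in the quotient) shows each extends analytically across the cut, hence is single valued and analytic on $\C\setminus\{0\}$, the origin being the only remaining finite singular point. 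Crucially, taking a logarithmic derivative kills the essential singularity $e^{\mp s/(2z)}$ carried by $\RR$, so the only singularity at $0$ is a pole. The plan is then a Liouville argument: from the local expansions I determine the pole order at $0$ and the behaviour at $\infty$, which pins down each function as a rational object of exactly the form asserted in \eqref{Laxtriple}, and the coefficient matrices are read off.

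For the $z$-equation I would match two expansions. Near $\infty$, insert \eqref{asympinfty}; writing $\bm\Lambda:=\mathrm{diag}(z^n,z^{-n})\RR$ and using \eqref{Tcondition} one finds $\partial_z\bm\Lambda\,\bm\Lambda^{-1}=-\tfrac12\ssigma_3+\tfrac1z\,\mathrm{diag}\big((n+\tfrac{\alpha}2)\Id_N+\BB,\,-(n+\tfrac{\alpha}2)\Id_N-\BB^*\big)+\tfrac{s}{2z^2}\ssigma_3$; conjugating by $\Id_{2N}+\YY^{(n)}_{-1}/z+\cdots$ adds the commutator $-\tfrac12[\YY^{(n)}_{-1},\ssigma_3]$ to the $1/z$ term, and \eqref{Y-1} turns this into the stated $\AA^{(n)}_{-1}$, while the constant term is $-\tfrac12\ssigma_3$. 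Near $0$, insert \eqref{asymp0}: the only $z^{-2}$ contribution comes from $\partial_z\RR\,\RR^{-1}\sim\tfrac{s}{2z^2}\ssigma_3$ conjugated by $\QQ^{(n)}=\YY^{(n)}(s;0)$, giving $\AA^{(n)}_{-2}=\tfrac{s}{2}\,\QQ^{(n)}\ssigma_3\big(\QQ^{(n)}\big)^{-1}$. Feeding in the factorized form of $\QQ^{(n)}$ from Corollary~\ref{cor1} (the diagonal block $\mathrm{diag}(\hPP_n(s;0),\hPP_n^{-*}(s;0))$ commutes with $\ssigma_3$ and drops out), a short computation with the skew-Hermitian $\pp_n,\qq_n$ reproduces the matrix $\AA^{(n)}_{-2}$ displayed in \eqref{A-2m}. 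Being rational with a double pole at $0$ and bounded at $\infty$, the function must equal $-\tfrac12\ssigma_3+\AA^{(n)}_{-1}/z+\AA^{(n)}_{-2}/z^2$, the two computations agreeing on the shared $1/z$ coefficient.

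The remaining two equations are quicker. For the $s$-equation, $\partial_s\RR\,\RR^{-1}=-\tfrac1{2z}\ssigma_3$, while $\partial_s\YY^{(n)}\,(\YY^{(n)})^{-1}$ is analytic at $0$ and $O(1/z)$ at $\infty$; thus the function has a single simple pole at $0$ with residue $\QQ^{(n)}\big(-\tfrac12\ssigma_3\big)(\QQ^{(n)})^{-1}=-\AA^{(n)}_{-2}/s$ and vanishes at $\infty$, so by Liouville it is identically $-\AA^{(n)}_{-2}/(sz)$. For the discrete equation the factor $\RR$ cancels, leaving $\UU^{(n)}=\YY^{(n+1)}(\YY^{(n)})^{-1}$, which is analytic at $0$ and, by \eqref{asympinfty}, behaves like $\mathrm{diag}(z,z^{-1})$ conjugated by $\Id_{2N}+\YY_{-1}/z+\cdots$ at $\infty$; hence it is a degree-one polynomial in $z$, and reading its coefficients off the $1/z$ data via \eqref{Y-1} (together with $\aalpha_n=\aa_{n,n-1}-\aa_{n+1,n}$ from \eqref{recoef} and $\bbeta_n=\gg_n^{-1}\gg_{n-1}$) yields the stated $\UU^{(n)}(s;z)$. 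Assembling the three identities gives \eqref{Laxtriple}.
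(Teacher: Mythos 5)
Your proposal is correct and follows essentially the same route as the paper's proof: conjugation by $\RR$ in \eqref{RRi} to obtain constant jumps, a Liouville-type argument showing the three logarithmic derivatives are rational with poles only at $0$ and $\infty$, the coefficient $\AA^{(n)}_{-1}$ read off from the expansion at infinity via $\tfrac12\big[\ssigma_3,\YY^{(n)}_{-1}\big]$ and \eqref{Y-1}, and $\AA^{(n)}_{-2}=\tfrac{s}{2}\QQ^{(n)}\ssigma_3\QQ^{(-n)}$ from the expansion at zero via Corollary~\ref{cor1}. The only departures are cosmetic: you place the branch cut of $z^{\alpha/2}$ and $z^{\BB}$ along $[0,\infty)$ (the paper keeps it on the negative axis, producing two rays with constant jumps), and you re-derive the difference relation and $\UU^{(n)}(s;z)$, which the paper simply cites from \cite[Theorem~2.16]{OPRH}.
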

\begin{proof}The third equation in \eqref{Laxtriple} as well as the expression of $\UU^{(n)}(s;z)$ does not depend on the particular weight we choose and their derivation can be found in \cite[Theorem~2.16]{OPRH}. For the first equation in \eqref{Laxtriple}, observe that $\Ppsi^{(n)}(s;z)$ satisfies a Riemann--Hilbert problem with jumps independent on $z$ and $s$. More precisely, its jumps are localised on the positive real axis (because of the original jumps of $\YY^{(n)}(s;z)$) and, for general $\alpha>0$, on the negative real axis, because of the determination of $z^\alpha$. Hence, the expression $(\pa_z \Ppsi^{(n)}(s;z))\Ppsi^{(-n)}(s;z)$ is analytic on $\C\mathbb{P}^{1}\setminus\{0,\infty\}$, and studying its behaviour at the singular points we can conclude that
 \begin{gather}\label{masterthm}
 (\pa_z \Ppsi^{(n)}(s;z))\Ppsi^{(-n)}(s;z) = -\frac{1}2 \ssigma_3 + \frac{\AA_{-1}^{(n)}(s)}{z} + \frac{\AA_{-2}^{(n)}(s)}{z^2},
\end{gather}
for some matrices $\AA_{-1}^{(n)}(s), \AA_{-2}^{(n)}(s)$ to be determined. For $\AA^{(n)}_{-1}(s)$, we develop the left hand side of~\eqref{masterthm} at infinity. Because of the particular shape of $\RR(s;z)$, we conclude that
 \begin{gather*}
 \AA^{(n)}_{-1}(s) = \left(\begin{matrix}
 (n + \alpha/2)\Id_N + \BB & 0\\
 0 & -(n + \alpha/2)\Id_N - \BB^*
 \end{matrix}\right) + \frac{1}2\big[\ssigma_3, \YY_{-1}^{(n)}(s)\big],
 \end{gather*}
 where $[\cdot,\cdot]$ denotes the standard commutator operator. This last equation together with the Proposition \ref{prop1} gives the desired form of $\AA^{(n)}_{-1}(s)$. For $\AA^{(n)}_{-2}(s)$, we expand the equation \eqref{masterthm} around zero and we obtain that
 \begin{gather*}
 \AA^{(n)}_{-2}(s) =\frac{s}2 \QQ^{(n)}(s) \ssigma_3 \QQ^{(-n)}(s),
 \end{gather*}
and this equation, together with the Corollary~\ref{cor1}, gives the precise form of~$\AA^{(n)}_{-2}(s)$. The second equation in~\eqref{Laxtriple} is deduced similarly, but it is simpler. Indeed, the expression
\begin{gather*} (\pa_s \Ppsi^{(n)}(s;z))\Ppsi^{(-n)}(s;z)
\end{gather*} has a singularity only at zero, and expanding around this point one finds
\begin{gather*}
(\pa_s \Ppsi^{(n)}(s;z))\Ppsi^{(-n)}(s;z) = -\frac{1}{2z} \QQ^{(n)}(s) \ssigma_3 \QQ^{(-n)}(s) = -\frac{\AA_{-2}^{(n)}(s)}{sz} . \tag*{\qed}
\end{gather*}\renewcommand{\qed}{}
\end{proof}

\section{The Lax equations and some monodromy identities}

Let us denote
 \begin{gather}\label{Abs}
 \cAA^{(n)}(s;z) := -\frac{1}2 \ssigma_3 + \frac{\AA_{-1}^{(n)}(s)}{z} + \frac{\AA_{-2}^{(n)}(s)}{z^2}, \qquad \cBB^{(n)}(s;z) := -\frac{\AA_{-2}^{(n)}(s)}{sz}.
 \end{gather}
The compatibility conditions between the equations in \eqref{Laxtriple} give rise to three matrix equations of the form
 \begin{gather}
 \frac{\pa}{\pa s} \cAA^{(n)}(s;z) - \frac{\pa}{\pa z} \cBB^{(n)}(s;z) + \big[ \cAA^{(n)}(s;z) , \cBB^{(n)}(s;z) \big] = 0, \label{PIII}\\
 \frac{\pa}{\pa z} \UU^{(n)}(s;z) + \UU^{(n)}(s;z)\cAA^{(n)}(s;z) - \cAA^{(n+1)}\UU^{(n)}(s;z) = 0, \label{dPIII}\\
 \frac{\pa}{\pa s} \UU^{(n)}(s;z) + \UU^{(n)}(s;z)\cBB^{(n)}(s;z) - \cBB^{(n+1)}(s;z)\UU^{(n)}(s;z) = 0. \label{Toda}
 \end{gather}
The equation \eqref{PIII} describes the isomonodromic deformation of the first equation in \eqref{Laxtriple} with respect to the parameter $s$ and will give rise to equations of Painlev\'e type; analogously the equation~\eqref{dPIII} describes the isomonodromic deformation of the first equation in \eqref{Laxtriple} with respect to the discrete parameter~$n$, and hence it corresponds to equations of discrete Painlev\'e type. Finally, the equation \eqref{Toda} describes the compatibility between the two deformations (discrete and continuous), and it will give equations of Toda type. In the following three propositions we will write down explicitly all these equations. It is more convenient, for what follows, to use, instead of the variables $\pp_n(s)$ and $\qq_n(s)$, the variables $\aa_n(s)$ and $\bb_n(s)$ defined by
\begin{gather}\label{defab}
 \aa_n(s) := 2 \pi i s \pp_n(s) \gg_n(s), \qquad \bb_n(s) := s \pp_n(s)\qq_n(s),
\end{gather}
as well as the quantities
\begin{gather}\label{defBn}
 \BB_n = \BB_n(s) := \gg_n(s) \BB \gg_n^{-1}(s), \qquad \hat{\BB}_n = \hat{\BB}_n(s) := \hPP_n(s;0) \BB \hPP_n^{-1}(s;0).
\end{gather}
Observe that the coefficient $\AA_{-2}^{(n)}(s)$ in \eqref{A-2m} can be written (with the new notation) in the following way:
\begin{gather}\label{A-2m2}
\AA_{-2}^{(n)}(s)=\left(\begin{matrix}
 \dfrac{s}{2}\Id_N-\bb_n(s) & -\dfrac{1}{2\pi i}\aa_n(s)\gg_n^{-1}(s) \vspace{1mm}\\
 -2\pi i\gg_n(s)\aa_n^{-1}(s)\bb_n(s)(s\Id_N-\bb_n(s)) & -\dfrac{s}{2}\Id_N+\bb_n^*(s)
 \end{matrix}\right)
\end{gather}
\begin{Remark}\label{rem1}
Observe that, using \eqref{defpq} and \cite[Lemma~2.19]{OPRH}, the coefficients $\aa_n(s)$ and $\bb_n(s)$ can be written in the following way
\begin{gather}
\label{ans1} \aa_n(s) = s\left(\int_0^{\infty}\frac{\hPP_n(s;y)\WW(s;y)\hPP_n^*(s;y)}{y}\d y\right)\gg_n(s),\\
\label{bns1} \bb_n(s) = s\left(\int_0^{\infty}\frac{\hPP_n(s;y)\WW(s;y)\hPP_{n-1}^*(s;y)}{y}\d y\right)\gg_{n-1}(s).
\end{gather}
These definitions are the matrix-valued versions of the coefficients $a_n$ and $b_n$ that appear in \cite[Lemma~2]{ChenIts}.
\end{Remark}

 In the following, in order to make the equations more readable, we suppress the (implicit) dependence on~$s$.
\begin{Proposition}
 The compatibility condition \eqref{PIII} is equivalent to the following list of equations:
 \begin{gather}
 s\dot{\gg}_n = \gg_n\aa_n, \label{P1}\\
 s\dot{\gg}_{n-1} = -\gg_n\aa_n^{-1}\bb_n(s\Id_N - \bb_n), \label{P2}\\
 s\dot{\aa}_n= (2n + \alpha + 1)\aa_n + \aa_n^2 + \BB\aa_n + \aa_n \BB^*_n - s\Id_N + \bb_n + \gg_n^{-1}\bb_n^*\gg_n, \label{P3}\\
 s\dot{\bb}_n = \bb_n + [\BB, \bb_n] - \aa_n^{-1}\bb_n(s\Id_N - \bb_n ) - \aa_n\bbeta_n \label{P4}.
\end{gather}
\end{Proposition}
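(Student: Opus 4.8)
The plan is to read off \eqref{P1}--\eqref{P4} from the zero--curvature condition \eqref{PIII} by expanding it as a Laurent series in $z$ and matching coefficients. Substituting \eqref{Abs} into \eqref{PIII} and noting that $\cAA^{(n)}$ carries only the powers $z^{0},z^{-1},z^{-2}$, that $\cBB^{(n)}=-\AA^{(n)}_{-2}/(sz)$ carries only $z^{-1}$, and that the would--be $z^{-3}$ term vanishes because $\big[\AA^{(n)}_{-2},\AA^{(n)}_{-2}\big]=0$, the left--hand side becomes a Laurent polynomial supported on $z^{-1}$ and $z^{-2}$. Setting each coefficient to zero and clearing the factor $1/s$ gives the pair
\begin{gather*}
 s\,\pa_s\AA^{(n)}_{-1}+\tfrac12\big[\ssigma_3,\AA^{(n)}_{-2}\big]=0, \\
 s\,\pa_s\AA^{(n)}_{-2}-\AA^{(n)}_{-2}-\big[\AA^{(n)}_{-1},\AA^{(n)}_{-2}\big]=0,
\end{gather*}
which is jointly equivalent to \eqref{PIII}; the four equations \eqref{P1}--\eqref{P4} are then extracted from the $N\times N$ blocks of these two $2N\times 2N$ identities.

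For the first identity I would use that $\big[\ssigma_3,\cdot\big]$ kills the diagonal blocks and doubles the off--diagonal ones. Since $n,\alpha,\BB$ do not depend on $s$, both diagonal blocks reduce to $0=0$. Using \eqref{Y-1} and \eqref{A-2m2}, the $(1,2)$ block reads $s\,\pa_s\big(\gg_n^{-1}\big)+\aa_n\gg_n^{-1}=0$, which becomes \eqref{P1} after substituting $\pa_s\big(\gg_n^{-1}\big)=-\gg_n^{-1}\dot\gg_n\gg_n^{-1}$ and multiplying by $\gg_n$ on the left and right; the $(2,1)$ block is $s\dot\gg_{n-1}+\gg_n\aa_n^{-1}\bb_n(s\Id-\bb_n)=0$, i.e.\ \eqref{P2}.

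The second identity carries the real content. Expanding $\big[\AA^{(n)}_{-1},\AA^{(n)}_{-2}\big]$ in blocks, the $(1,1)$ block yields \eqref{P4}: the scalar parts of $\AA^{(n)}_{-1}$ cancel, so the diagonal commutator contributes only $-[\BB,\bb_n]$, while the two off--diagonal products collapse to $\aa_n^{-1}\bb_n(s\Id-\bb_n)$ and $-\aa_n\bbeta_n$ once \eqref{betn} is used. The $(1,2)$ block yields \eqref{P3}: one differentiates $-\tfrac{1}{2\pi i}\aa_n\gg_n^{-1}$, eliminates $s\,\pa_s\big(\gg_n^{-1}\big)$ through \eqref{P1}, gathers the commutator terms, and identifies $\gg_n^{-1}\BB^*\gg_n=\BB^*_n$ from \eqref{defBn}. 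I expect this $(1,2)$ block to be the main obstacle, being the only one that couples an $s$--derivative to the full commutator and that must be simplified using both \eqref{P1} and the conjugation identity for $\BB$.

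It then remains to see that the $(2,1)$ and $(2,2)$ blocks of the second identity produce no new equations. Invoking the skew--Hermiticity of $\pp_n,\qq_n$ from Corollary~\ref{cor1}, which yields auxiliary relations such as $\aa_n^*=\gg_n\aa_n\gg_n^{-1}$, a direct (if lengthy) computation reduces these two blocks to $\gg_n$--conjugated Hermitian transposes of \eqref{P3} and \eqref{P4}. Hence the two matrix identities hold precisely when \eqref{P1}--\eqref{P4} do, which gives both implications and establishes the equivalence.
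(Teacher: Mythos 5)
Your proposal is correct and follows essentially the same route as the paper: expand \eqref{PIII} in powers of $z$ and match block entries, with the identical assignment (the $z^{-1}$ coefficients of the $(1,2)$ and $(2,1)$ entries give \eqref{P1} and \eqref{P2}, and the $z^{-2}$ coefficients of the $(1,2)$ and $(1,1)$ entries give \eqref{P3} and \eqref{P4}); repackaging the Laurent coefficients first into the two matrix identities is only a cosmetic difference. Your closing observation that the $(2,1)$ and $(2,2)$ blocks are $\gg_n$-conjugated Hermitian transposes of \eqref{P3}--\eqref{P4} (via $\aa_n^*=\gg_n\aa_n\gg_n^{-1}$ and \eqref{bb*}) just makes explicit what the paper asserts when it says the other entries give linear combinations of the ones already used.
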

\begin{proof}
The proof of this proposition, as the following one, is just by straightforward computations. More precisely, we have, using the new definition of $\AA_{-2}^{(n)}(s)$ in~\eqref{A-2m2}, that the coefficient~$z^{-1}$ of the entry $(1,2)$ of~\eqref{PIII} gives~\eqref{P1}; the coefficient $z^{-1}$ of the entry $(2,1)$ of~\eqref{PIII} gives~\eqref{P2}; the coefficient $z^{-2}$ of the entry $(1,2)$ of~\eqref{PIII} gives~\eqref{P3} and the coefficient $z^{-2}$ of the entry $(1,1)$ of \eqref{PIII} gives~\eqref{P4}. The other entries give linear combinations of the ones already used.
\end{proof}

\begin{Proposition} The compatibility condition \eqref{dPIII} is equivalent to the following list of equations:
 \begin{gather}
 \aalpha_n = (2n + \alpha + 1)\Id_N + \aa_{n} + \BB + \BB^*_n, \label{dP1}\\
 s\Id_N - \aalpha_n\aa_n = \bb_{n+1} + \gg_n^{-1}\bb_n^*\gg_n, \label{dP2} \\
 \bb^2_{n+1} - s\bb_{n+1} = \aa_{n+1}\bbeta_{n+1}\aa_n, \label{dP3}\\
 \bbeta_{n+1} - \bbeta_n = \aalpha_{n} + \bb_{n+1} - \bb_n + [\BB, \aalpha_n ], \label{dP4}\\
 \aa_{n+1}\bbeta_{n+1} - \bbeta_{n}\aa_{n-1} = \aalpha_n\bb_n - \bb_{n+1}\aalpha_n. \label{dP5}
\end{gather}
\end{Proposition}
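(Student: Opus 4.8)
The plan is to proceed exactly as in the proof of the previous proposition: substitute the explicit expressions~\eqref{A-2m} (or, more conveniently,~\eqref{A-2m2}), the formula~\eqref{Y-1} for $\AA_{-1}^{(n)}$, and the given form of $\UU^{(n)}(s;z)$ into the compatibility condition~\eqref{dPIII}, and then read off the content power by power in~$z$. Since $\UU^{(n)}(s;z)$ is affine in $z$ with $\pa_z\UU^{(n)}(s;z)$ equal to the block-diagonal matrix $\mathrm{diag}(\Id_N,0)$, while $\cAA^{(n)}(s;z)$ carries only the powers $z^0,z^{-1},z^{-2}$, the left-hand side of~\eqref{dPIII} is a Laurent polynomial in $z$ supported on $z^1,z^0,z^{-1},z^{-2}$. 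Each of these four coefficients is a $2N\times 2N$ matrix that must vanish, and each splits into four $N\times N$ blocks, so a priori one gets sixteen block identities; the work consists in showing that, after simplification, exactly~\eqref{dP1}--\eqref{dP5} survive.

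First I would dispose of the two highest powers. Writing $M_1=\mathrm{diag}(\Id_N,0)=\pa_z\UU^{(n)}$ and $M_0=\UU^{(n)}(s;0)$, the coefficient of $z^1$ is $-\tfrac12[M_1,\ssigma_3]$, which vanishes because both matrices are block-diagonal. The coefficient of $z^0$ is $M_1+\tfrac12[\ssigma_3,M_0]+M_1\AA_{-1}^{(n)}-\AA_{-1}^{(n+1)}M_1$; using the explicit $\AA_{-1}^{(n)}$ from Proposition~\ref{prop1} one checks blockwise that this vanishes identically---the shift $n\mapsto n+1$ in the $(1,1)$ block produces precisely the $\Id_N$ that cancels $M_1$, and the off-diagonal $\gg$-terms cancel against those in $\tfrac12[\ssigma_3,M_0]$. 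Hence the top two powers impose no constraint, and the content lives entirely in the coefficients of $z^{-1}$ and $z^{-2}$. The clean matches I expect are: the $(1,2)$ block at order $z^{-1}$ gives, after right multiplication by $\gg_n$ and the identity $\BB_n^*=\gg_n^{-1}\BB^*\gg_n$ coming from~\eqref{defBn} and the Hermiticity of $\gg_n$, exactly~\eqref{dP1}; the $(1,1)$ block at order $z^{-1}$ gives~\eqref{dP4} once one uses $\bbeta_n=\gg_n^{-1}\gg_{n-1}$ from~\eqref{betn}; and the $(1,2)$ block at order $z^{-2}$ gives~\eqref{dP2}. These three are essentially immediate once the products are expanded.

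The delicate point, which I expect to be the main obstacle, is the simultaneous extraction of~\eqref{dP3} and~\eqref{dP5}: these two equations are coupled across the $z^{-2}$ coefficient. Concretely, the $(1,1)$ block at order $z^{-2}$ produces the combination $\aalpha_n\bb_n-\bb_{n+1}\aalpha_n-\aa_n^{-1}\bb_n(s\Id_N-\bb_n)-\aa_{n+1}\bbeta_{n+1}=0$, which is~\eqref{dP5} modulo~\eqref{dP3} (since~\eqref{dP3} at the shifted index is equivalent to $\aa_n^{-1}\bb_n(s\Id_N-\bb_n)=-\bbeta_n\aa_{n-1}$), and conversely; so neither can be isolated from this block alone. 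One must bring in an independent relation, which I expect to come from the $(2,1)$ block at order $z^{-2}$: simplifying it with the already-derived~\eqref{dP1},~\eqref{dP2} and the skew-Hermiticity of $\pp_n,\qq_n$ from Corollary~\ref{cor1}, one should be able to pin down~\eqref{dP3}, after which~\eqref{dP5} follows from the $(1,1)$ block. Managing the interplay between an equation and its conjugate transpose here---keeping track of the non-commuting orders such as $\aalpha_n\aa_n$ versus $\aa_n\aalpha_n$---is where the computation is genuinely lengthy rather than merely tedious.

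Finally I would verify that the surviving blocks---the $(2,1)$ and $(2,2)$ blocks at order $z^{-1}$ and the $(2,2)$ block at order $z^{-2}$---carry no new information. Each reduces, via the equivalence of the two forms~\eqref{A-2m} and~\eqref{A-2m2} of $\AA_{-2}^{(n)}$ (which itself rests on $\gg_{n+1}\aa_{n+1}^{-1}\bb_{n+1}=\tfrac{1}{2\pi i}\qq_{n+1}$ and $\bb_n^*=s\qq_n\pp_n$) and the Hermitian symmetry built into Corollary~\ref{cor1}, to a conjugate transpose or a linear combination of the identities already established; in particular the $(2,1)$ block at order $z^{-1}$ becomes an automatic identity once~\eqref{dP1} and~\eqref{dP3} are in hand. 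This yields one implication, and since every manipulation above is reversible, the converse---that~\eqref{dP1}--\eqref{dP5} force every block of~\eqref{dPIII} to vanish---is the same computation read backwards, establishing the claimed equivalence.
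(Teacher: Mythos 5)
Your overall strategy coincides with the paper's: substitute the explicit matrices into \eqref{dPIII}, observe that the coefficients of $z^{1}$ and $z^{0}$ vanish identically (your blockwise check of this is correct), and read off the block entries of the coefficients of $z^{-1}$ and $z^{-2}$. Your assignments also agree with the paper where you make them: \eqref{dP1} from the $(1,2)$ block at $z^{-1}$, \eqref{dP4} from the $(1,1)$ block at $z^{-1}$, \eqref{dP2} from the $(1,2)$ block at $z^{-2}$, and your observation that the $(1,1)$ block at $z^{-2}$ yields only \eqref{dP5} modulo the index-shifted \eqref{dP3} is accurate.

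The genuine gap is your treatment of \eqref{dP3}. The paper obtains it from the $(2,2)$ block at order $z^{-2}$ --- precisely one of the blocks you explicitly dismiss as carrying ``no new information''. Using \eqref{A-2m2}, that block reads
\begin{gather*}
\gg_n\aa_n\gg_n^{-1}+\gg_{n+1}\aa_{n+1}^{-1}\bb_{n+1}(s\Id_N-\bb_{n+1})\gg_n^{-1}=0,
\end{gather*}
and multiplying by $\aa_{n+1}\gg_{n+1}^{-1}$ on the left and by $\gg_n$ on the right, with $\bbeta_{n+1}=\gg_{n+1}^{-1}\gg_n$ from \eqref{betn}, gives exactly \eqref{dP3} in one line, with no Hermiticity input. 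By contrast, the $(2,1)$ block at $z^{-2}$, through which you propose to pin down \eqref{dP3}, cannot do so by algebra alone. That block is
\begin{gather*}
s\gg_n-\gg_n\bb_n-\bb_{n+1}^{*}\gg_n+\gg_{n+1}\aa_{n+1}^{-1}\bb_{n+1}(s\Id_N-\bb_{n+1})\aalpha_n=0,
\end{gather*}
and if one eliminates $\bb_{n+1}^{*}$ via \eqref{bb*} (the consequence of Corollary \ref{cor1} you invoke) and then uses \eqref{dP2} together with \eqref{bb*} again, it collapses to the Sylvester equation
\begin{gather*}
s\bm{Z}=\bb_{n+1}\bm{Z}+\bm{Z}\,\aa_n^{-1}\bb_n\aa_n,\qquad
\bm{Z}:=\bb_{n+1}(s\Id_N-\bb_{n+1})+\aa_{n+1}\bbeta_{n+1}\aa_n,
\end{gather*}
where $\bm{Z}=0$ is precisely the content of \eqref{dP3}. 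This equation forces $\bm{Z}=0$ only when $s$ avoids the sum of the spectra of $\bb_{n+1}$ and $\bb_n$, a condition that would require separate (non-algebraic) justification; as an identity among the listed relations, the $(2,1)$ block is strictly weaker than \eqref{dP3} --- it is implied by it, but does not imply it. The repair is simply to swap the roles you assigned to the two bottom blocks: \eqref{dP3} comes directly from the $(2,2)$ entry, after which \eqref{dP5} follows from the $(1,1)$ entry exactly as you describe, and it is the $(2,1)$ entry that is redundant.
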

\begin{proof}
\eqref{dP1} is given by the term in $z^{-1}$ in the entry $(1,2)$ of \eqref{dPIII}; \eqref{dP2} is given by the term in $z^{-2}$ in the entry $(1,2)$ of \eqref{dPIII}; \eqref{dP3} is given by the term in $z^{-2}$ in the entry $(2,2)$ of \eqref{dPIII}; \eqref{dP4} is given by the term in $z^{-1}$ in the entry $(1,1)$ of \eqref{dPIII} and \eqref{dP5} is given by the term in $z^{-2}$ in the entry $(1,1)$ of \eqref{dPIII}. All the other entries give either linear combinations of the relations above or trivial relations.
\end{proof}

The compatibility condition \eqref{Toda} gives no new equations, except for the time derivative of~$\aalpha_n(s)$, reading
\begin{gather*}
 s\dot{\aalpha}_n = \bb_n - \bb_{n+1},
\end{gather*}
(this is given by the coefficient $z^0$ in the $(1,1)$ entry of \eqref{Toda}). Using \eqref{dP4}, \eqref{betn} and \eqref{P1} we get the couple of first-order differential equations
\begin{gather*}
\nonumber s\dot{\aalpha}_n= \aalpha_n-\bbeta_{n+1}+\bbeta_n+ [\BB, \aalpha_n ],\\
\nonumber s\dot{\bbeta}_n = \bbeta_n\aa_{n-1}-\aa_n\bbeta_n,
\end{gather*}
which, using \eqref{dP1}, gives a couple of differential equations satisfied by the coefficients~$\aalpha_n$ and~$\bbeta_n$ of the three-term recurrence relation~\eqref{recoef}. These equations, in the scalar setting, are also known as \emph{Toda equations}.

Another interesting relation is combining \eqref{dP1}, \eqref{dP4} and performing a telescopic sum to get
\begin{gather}\label{dP4e}
\bbeta_n=n((n+\alpha)\Id_N+\BB)+\bb_n+\sum_{k=0}^{n-1}\aa_k+\BB_k^*+ [\BB, \aa_k+\BB_k^* ].
\end{gather}

In the scalar case, there is one more important identity that can be deduced by the (scalar analogues of the) equations \eqref{dP1}--\eqref{dP5}; it is the formal monodromy identity stated in \cite[Lemma~3]{ChenIts}. Nevertheless the proof, in the matrix case, cannot be repeated in the same way and we have to use (as suggested by the name given to the identity) some relations between the expansion of $\Ppsi^{(n)}(s;z)$ at infinity and at zero.

\begin{Proposition}\label{propg}
\begin{gather}
\aa_n\bbeta_n = s\big(n\Id_N+\BB+\aa_n^{-1}\bb_n-\hat{\BB}_n\big)-\bb_n ((2n+\alpha)\Id_N+\BB ) \nonumber\\
\hphantom{\aa_n\bbeta_n =}{} -\bb_n\aa_n^{-1}\bb_n-\aa_n\BB_n^*\aa_n^{-1}\bb_n.\label{formalmonodromy}
\end{gather}
\end{Proposition}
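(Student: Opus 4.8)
The plan is to prove \eqref{formalmonodromy} as a \emph{formal monodromy} statement at the irregular singularity $z=0$, as the remark preceding the proposition suggests: it is precisely the relation that cannot be obtained from the compatibility conditions \eqref{dP1}--\eqref{dP5} and instead encodes a matching between the exponents imposed by the conjugating factor $\RR(s;z)$ and the residue $\AA_{-1}^{(n)}$ read in the frame in which the leading coefficient $\AA_{-2}^{(n)}$ is diagonalized. Concretely, I would start from $\Ppsi^{(n)}=\YY^{(n)}\RR$ and rewrite the coefficient of the first equation in \eqref{Laxtriple} as
\begin{gather*}
\cAA^{(n)}=\big(\pa_z\Ppsi^{(n)}\big)\Ppsi^{(-n)}=\big(\pa_z\YY^{(n)}\big)\YY^{(-n)}+\YY^{(n)}\big[(\pa_z\RR)\RR^{-1}\big]\YY^{(-n)}.
\end{gather*}

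Since $\RR$ is diagonal and $\TT(s;z)=z^{\BB}$, the factor $(\pa_z\RR)\RR^{-1}$ is elementary and purely diagonal; a direct logarithmic differentiation gives
\begin{gather*}
(\pa_z\RR)\RR^{-1}=-\tfrac12\ssigma_3+\frac{s}{2z^2}\ssigma_3+\frac{\Theta_0}{z},\qquad
\Theta_0:=\left(\begin{matrix}\tfrac{\alpha}2\Id_N+\BB & 0\\ 0 & -\tfrac{\alpha}2\Id_N-\BB^*\end{matrix}\right).
\end{gather*}
By \eqref{asymp0}, $\YY^{(n)}=\QQ^{(n)}\big(\Id_{2N}+\YY_1^{(n)}z+\cdots\big)$ is analytic and invertible at $z=0$, so $(\pa_z\YY^{(n)})\YY^{(-n)}$ is holomorphic there and contributes nothing to the principal part of $\cAA^{(n)}$ at the origin. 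Matching the $1/z^2$ coefficient recovers the already known $\AA_{-2}^{(n)}=\tfrac s2\QQ^{(n)}\ssigma_3\QQ^{(-n)}$, while matching the $1/z$ coefficient yields the key relation
\begin{gather*}
\QQ^{(-n)}\AA_{-1}^{(n)}\QQ^{(n)}=\Theta_0+\tfrac s2\big[\YY_1^{(n)},\ssigma_3\big].
\end{gather*}

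Here is the structural point on which the whole argument turns, and the step I expect to require the most care: the unknown subleading coefficient $\YY_1^{(n)}$ enters only through the commutator $\big[\YY_1^{(n)},\ssigma_3\big]$, which is block-off-diagonal. Consequently the block-diagonal part of $\QQ^{(-n)}\AA_{-1}^{(n)}\QQ^{(n)}$ is free of $\YY_1^{(n)}$ and equals $\Theta_0$ — this is the formal monodromy identity in closed form. Extracting its $(1,1)$ block gives
\begin{gather*}
\big(\QQ^{(-n)}\AA_{-1}^{(n)}\QQ^{(n)}\big)_{(1,1)}=\tfrac{\alpha}2\Id_N+\BB.
\end{gather*}
The nontrivial input is exactly the passage from the ODE to this closed block-diagonal relation; once it is established, the rest is bookkeeping.

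Finally, I would substitute the explicit forms of $\QQ^{(n)}$ and $\QQ^{(-n)}$ from Corollary~\ref{cor1} and of $\AA_{-1}^{(n)}$, conjugate by $\hPP_n(s;0)$ (which turns $\BB$ into $\hat\BB_n$), and translate everything into $\aa_n,\bb_n,\BB_n,\bbeta_n$ via \eqref{defab}, \eqref{defBn} and \eqref{betn}. The term $-2\pi i\,s\,\pp_n\gg_{n-1}$ coming from the $(2,1)$ entry of $\AA_{-1}^{(n)}$ collapses to $-\aa_n\bbeta_n$ through $\pp_n\gg_{n-1}=\pp_n\gg_n\bbeta_n$ and $\aa_n=2\pi i\,s\,\pp_n\gg_n$, producing the left-hand side of \eqref{formalmonodromy}; the remaining pieces reassemble, using $\aa_n^{-1}\bb_n=\tfrac1{2\pi i}\gg_n^{-1}\qq_n$ and $\BB_n^*=\gg_n^{-1}\BB^*\gg_n$, into the terms $s\aa_n^{-1}\bb_n$, $\bb_n\aa_n^{-1}\bb_n$ and $\aa_n\BB_n^*\aa_n^{-1}\bb_n$. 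This computation is routine; the only genuine delicacy is respecting non-commutativity when rewriting $\pp_n\BB^*\qq_n$ as $\aa_n\BB_n^*\aa_n^{-1}\bb_n$ and keeping the ordering of the $\bb_n$-factors intact.
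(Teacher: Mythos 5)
Your proposal is correct and is essentially the paper's own proof: the paper likewise expands $\big(\partial_z\Ppsi^{(n)}\big)\Ppsi^{(-n)}$ around $z=0$, extracts the $z^{-1}$ coefficient to obtain $\QQ^{(n)}\big[\Theta_0+\tfrac s2\big[\YY_1^{(n)},\ssigma_3\big]\big]\QQ^{(-n)}=\AA_{-1}^{(n)}$ (its equation \eqref{newA-1}), moves the conjugation to the other side, and reads off the $(1,1)$ block — which is free of $\YY_1^{(n)}$ precisely because $\big[\YY_1^{(n)},\ssigma_3\big]$ is block off-diagonal, the point you correctly isolate — before substituting the explicit forms of $\QQ^{(\pm n)}$ and $\AA_{-1}^{(n)}$ and translating into $\aa_n$, $\bb_n$, $\BB_n$, $\hat\BB_n$, $\bbeta_n$. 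Your closing bookkeeping (e.g., $-2\pi i\,s\,\pp_n\gg_{n-1}=-\aa_n\bbeta_n$ and $s\,\pp_n\BB^*\qq_n=\aa_n\BB_n^*\aa_n^{-1}\bb_n$) reproduces \eqref{formalmonodromy} exactly as in the paper.
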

\begin{proof}We recall that, in the Section above, the explicit expression of $\AA_{-1}^{(n)}(s)$ had been obtained computing the coefficient $z^{-1}$ in the expansion of $\big(\partial_z \Ppsi^{(n)}(s;z)\big)\Ppsi^{(-n)}(s;z)$ around infinity, and using \eqref{Y-1}. Of course, one can also compute the expansion around zero of the same expression. The coefficient $z^{-1}$, in this case, will give the identity
 \begin{gather}\label{newA-1}
 \QQ^{(n)}(s)
 \left[ \left( \begin{matrix}
 \dfrac{\alpha}2\Id_N + \BB & 0\\
 0 & -\dfrac{\alpha}2\Id_N - \BB^*
 \end{matrix}\right) +
 \frac{s}2 \big[\YY_1^{(n)}(s),\ssigma_3\big] \right]
 \QQ_n^{(-n)}(s) = \AA_{-1}^{(n)}(s).
 \end{gather}
Bringing the conjugation by $\QQ^{(n)}(s)$ on the right hand side of~\eqref{newA-1}, the $(1,1)$ entry of the resulting equation will give exactly the equation~\eqref{formalmonodromy}.
\end{proof}

As a consequence of \eqref{formalmonodromy}, and using \eqref{dP3}, we have the identity
\begin{gather*}
\aa_n\bbeta_n+\bbeta_n\aa_{n-1}=s(n\Id_N+\BB-\hat{\BB}_n)-\bb_n((2n+\alpha)\Id_N+\BB)\\
\hphantom{\aa_n\bbeta_n+\bbeta_n\aa_{n-1}=}{} -\aa_n\BB_n^*\aa_n^{-1}\bb_n-\bb_n\aa_n^{-1}\bb_n+\aa_n^{-1}\bb_n^2,
\end{gather*}
which can be viewed as a matrix-valued version of formula~(2.12) of~\cite{ChenIts}.

\begin{Remark}
Symmetrically, one can also compute $\AA_{-2}^{(n)}(s)$ using the expansion of $\Ppsi^{(n)}(s;z)$ around infinity. In this case, the following equation is obtained:
 \begin{gather}
\frac{s}2\Id_N + \frac{1}2\big[\ssigma_3,\YY_{-2}^{(n)}\big] + \left[ \YY_{-1}^{(n)}, \left( \begin{matrix}
 (n+\alpha/2)\Id_N + \BB & 0\\
 0 & -(n+\alpha/2)\Id_N - \BB^*
 \end{matrix}\right) \right]\nonumber\\
\qquad{} + \frac{1}2 \big[\YY_{-1}^{(n)},\ssigma_3\big]\YY_{-1}^{(n)} - \YY_{-1}^{(n)} = \AA_{-2}^{(n)}. \label{newA-2}
 \end{gather}
The entry $(1,1)$ of the relation above gives the equation
\begin{gather*}
 \bbeta_{n} - \bb_n = \aa_{n,n-1} + [ \BB, \aa_{n,n-1} ],
\end{gather*}
which is equivalent to \eqref{dP4}, since by definition $\aalpha_n(s) = \aa_{n+1,n}(s) - \aa_{n,n-1}(s).$
\end{Remark}
\begin{Remark}The diagonal entries of \eqref{newA-1} and \eqref{newA-2} give, respectively, the off--diagonal block elements of $\YY_1^{(n)}(s)$ and $\YY_{-2}^{(n)}(s)$. One can continue on this direction and:
\begin{enumerate}\itemsep=0pt
\item[a)] compute the terms in $z^{k}$, $k \geq 0$ of the expansion of $\big(\partial_s \Ppsi^{(n)}(s;z)\big)\Ppsi^{(-n)}(s;z)$ around zero,
\item[b)] compute the terms in $z^{-k}$, $k \geq 3$ of the same expression around zero.
\end{enumerate}
In this way, all elements $\YY_{\pm k}$ in the expansions \eqref{asympinfty} and \eqref{asymp0} can be recursively computed just in function of the entries of $\QQ^{(n)}(s)$ and $\YY_{-1}^{(n)}(s)$.
\end{Remark}

\section{The non-commutative Toda and Painlev\'e systems}
In the scalar case, Chen and Its managed to give a closed system of two first-order difference and differential equations for the two variables $\aa_n$, $\bb_n$. This system is immediately seen to be equivalent to the Painlev\'e III equation for the variable~$\aa_n$ (see \cite[Lemma~5 and Theorem~1]{ChenIts}). Here we prove that, in the matrix case, it is possible to give a system (of the first-order) of difference and differential equations for the \emph{four} variables $\aa_n$, $\bb_n$, $\BB_n$, $\hat{\BB}_n$, where the presence of the last two variables is a clear consequence of the presence of the non-commutative constant matrix~$\BB$. Additionally, one can further reduce the system to a system of second-order difference and differential equations for the two variables~$\aa_n$, $\bb_n$. We need first the following lemma.

\begin{Lemma}For any $n \geq 0$, the following equation holds:
 \begin{gather}\label{bb*}
 \gg_n^{-1}(s)\bb^*_n(s)\gg_n(s) = \aa_n^{-1}(s)\bb_n(s)\aa_n(s).
 \end{gather}
 Moreover, the derivative of $\hPP_n(s;0)$ can be expressed as
 \begin{gather}\label{Pdot}
 s\dot{\hPP}_n(s;0)\hPP_n^{-1}(s;0) = n\Id_N + \BB - \hat{\BB}_n(s)+ \aa_n^{-1}(s)\bb_n(s).
 \end{gather}
\end{Lemma}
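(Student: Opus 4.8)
The plan is to handle the two identities in turn, proving \eqref{bb*} by a direct manipulation of the definitions and then feeding it into the derivation of \eqref{Pdot}. For \eqref{bb*} I would work entirely with $\pp_n$, $\qq_n$, $\gg_n$. By Corollary~\ref{cor1} both $\pp_n$ and $\qq_n$ are skew-Hermitian, so taking the adjoint of $\bb_n = s\pp_n\qq_n$ gives $\bb_n^* = s\qq_n^*\pp_n^* = s\qq_n\pp_n$, whence $\gg_n^{-1}\bb_n^*\gg_n = s\gg_n^{-1}\qq_n\pp_n\gg_n$. For the right-hand side, the relation $\aa_n = 2\pi i\, s\pp_n\gg_n$ in \eqref{defab} gives $s\pp_n = \frac{1}{2\pi i}\aa_n\gg_n^{-1}$, hence $\bb_n = \frac{1}{2\pi i}\aa_n\gg_n^{-1}\qq_n$ and therefore $\aa_n^{-1}\bb_n = \frac{1}{2\pi i}\gg_n^{-1}\qq_n$. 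Multiplying on the right by $\aa_n = 2\pi i\, s\pp_n\gg_n$ yields $\aa_n^{-1}\bb_n\aa_n = s\gg_n^{-1}\qq_n\pp_n\gg_n$, which coincides with the left-hand side computed above; this proves \eqref{bb*}. I would keep the intermediate identity $\aa_n^{-1}\bb_n = \frac{1}{2\pi i}\gg_n^{-1}\qq_n$, which is the bridge to the second part.

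For \eqref{Pdot} the idea is to read off $\pa_s\QQ^{(n)}$ from the $s$-part of the Lax system. Substituting $\Ppsi^{(n)} = \YY^{(n)}\RR$ into the second equation of \eqref{Laxtriple} and using that $\TT$ does not depend on $s$, so that $(\pa_s\RR)\RR^{-1} = -\frac{1}{2z}\ssigma_3$, turns that equation into $\pa_s\YY^{(n)} = \frac{1}{2z}\YY^{(n)}\ssigma_3 - \frac{\AA_{-2}^{(n)}}{sz}\YY^{(n)}$. I would then insert the expansion $\YY^{(n)} = \QQ^{(n)}\big(\Id_{2N} + \YY_1^{(n)}z + \cdots\big)$ from \eqref{asymp0} and match powers of $z$: the $z^{-1}$ term reproduces $\AA_{-2}^{(n)} = \frac{s}{2}\QQ^{(n)}\ssigma_3\QQ^{(-n)}$ (already established in the Theorem), and substituting this back, the $z^{0}$ term collapses to $\pa_s\QQ^{(n)} = \frac{1}{2}\QQ^{(n)}\big[\YY_1^{(n)},\ssigma_3\big]$.

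To make this explicit I would eliminate $\frac{s}{2}\big[\YY_1^{(n)},\ssigma_3\big]$ using the monodromy relation \eqref{newA-1} from the proof of Proposition~\ref{propg}: multiplying it on the right by $\QQ^{(n)}$ and comparing with the previous display gives $s\,\pa_s\QQ^{(n)} = \AA_{-1}^{(n)}\QQ^{(n)} - \QQ^{(n)}\,\mathrm{diag}\big(\tfrac{\alpha}{2}\Id_N + \BB,\, -\tfrac{\alpha}{2}\Id_N - \BB^*\big)$. Taking the $(1,1)$ block, whose left-hand side is $s\dot{\hPP}_n(s;0)$ since the $(1,1)$ entry of $\QQ^{(n)}$ is $\hPP_n(s;0)$ by Corollary~\ref{cor1}, and inserting the explicit $\AA_{-1}^{(n)}$ and $\QQ^{(n)}$ from the Theorem and Corollary~\ref{cor1}, I would obtain after right-multiplication by $\hPP_n^{-1}(s;0)$ the expression $s\dot{\hPP}_n(s;0)\hPP_n^{-1}(s;0) = \big(n + \tfrac{\alpha}{2}\big)\Id_N + \BB + \frac{1}{2\pi i}\gg_n^{-1}\qq_n - \hPP_n(s;0)\big(\tfrac{\alpha}{2}\Id_N + \BB\big)\hPP_n^{-1}(s;0)$. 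Recognizing $\hPP_n(s;0)\big(\tfrac{\alpha}{2}\Id_N + \BB\big)\hPP_n^{-1}(s;0) = \tfrac{\alpha}{2}\Id_N + \hat{\BB}_n$ and replacing $\frac{1}{2\pi i}\gg_n^{-1}\qq_n$ by $\aa_n^{-1}\bb_n$ via the first part gives precisely \eqref{Pdot}.

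The only delicate point is the passage to $\pa_s\QQ^{(n)}$ near $z = 0$: although $\RR$ carries the essential singularity $\mathrm{e}^{\mp s/2z}$ and the branch factor $z^{\pm\alpha/2}$, these cancel in $(\pa_s\RR)\RR^{-1}$, so the argument can be run on the regular expansion of $\YY^{(n)}$. Once this is granted, the rest is block bookkeeping, the genuinely new inputs being relation \eqref{newA-1} and the identification of $\frac{1}{2\pi i}\gg_n^{-1}\qq_n$ with $\aa_n^{-1}\bb_n$ furnished by \eqref{bb*}, which is what ties the two halves of the lemma together.
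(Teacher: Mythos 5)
Your proof is correct and follows essentially the same route as the paper: the skew-Hermitian manipulation of $\pp_n$, $\qq_n$ for \eqref{bb*}, then the $z^0$ term of the expansion of the $s$-equation at the origin giving $\dot{\QQ}^{(n)} = \tfrac12\QQ^{(n)}\big[\YY_1^{(n)},\ssigma_3\big]$, combined with \eqref{newA-1} and the $(1,1)$ block to obtain \eqref{Pdot}. The only differences are cosmetic (the paper left-multiplies by $\aa_n$ in the first part, and you spell out the block bookkeeping and the bridge identity $\aa_n^{-1}\bb_n = \frac{1}{2\pi i}\gg_n^{-1}\qq_n$ that the paper leaves implicit).
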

\begin{proof}For the first \eqref{bb*}, using \eqref{defab} and the fact that $\pp_n$ and $\qq_n$ are skew-Hermitian, we have that
\begin{gather*}
\aa_n\gg_n^{-1}\bb^*_n\gg_n=2\pi is\pp_n s\qq_n^*\pp_n^*\gg_n=(s\pp_n\qq_n)(2\pi is\pp_n\gg_n)=\bb_n\aa_n.
\end{gather*}
For the second \eqref{Pdot} we first observe that, expanding $\big(\partial_s \Ppsi^{(n)}(s;z)\big)\Ppsi^{(-n)}(s;z)$ around zero, the constant term in $z$ should give zero. This gives the equation
\begin{gather*}
 \dot{\QQ}^{(n)}(s) = \frac{1}2 \QQ^{(n)}(s)\big[\YY_1^{(n)}(s), \ssigma_3\big].
\end{gather*}
On the other hand \eqref{newA-1} can be rewritten as
\begin{gather*}
 \frac{s}2\QQ^{(n)}(s) \big[\YY_1^{(n)}(s), \ssigma_3\big] = \AA_{-1}^{(n)}(s)\QQ^{(n)}(s) - \QQ^{(n)}(s)\left( \begin{matrix}
 \dfrac{\alpha}2\Id_N + \BB & 0\\
 0 & -\dfrac{\alpha}2\Id_N - \BB^*
 \end{matrix}\right),
\end{gather*}
so that we get
\begin{gather*}
 s\dot{\QQ}^{(n)}(s) = \AA_{-1}^{(n)}(s)\QQ^{(n)}(s) - \QQ^{(n)}(s)\left( \begin{matrix}
 \dfrac{\alpha}2\Id_N + \BB & 0\\
 0 & -\dfrac{\alpha}2\Id_N - \BB^*
 \end{matrix}\right).
\end{gather*}
The equation \eqref{Pdot} is just the $(1,1)$ entry of this last identity.
\end{proof}

Now let us give the two main results of this paper, namely a couple of non-linear non-com\-mutative second-order difference and differential equations for the coefficients~$\aa_n(s)$ and~$\bb_n(s)$. We start first with the discrete version. In the formulas below, again, we suppress the implicit dependence on~$s$.

\begin{Proposition} The variables $\aa_n$, $\bb_n$, $\BB_n$, $\hat{\BB}_n$ defined in \eqref{defab} and \eqref{defBn} satisfy the $($closed$)$ system of first-order difference equations
\begin{gather}
\aa_{n-1}\bb_{n} + \bb_{n-1}\aa_{n-1} = s\aa_{n-1} - (2n + \alpha - 1)\aa_{n-1}^2 - \aa_{n-1}^3 -\aa_{n-1}( \BB +\BB^*_{n-1})\aa_{n-1},\nonumber\\
\bb_n^2 - s\bb_n = \big(s\big(n\Id_N + \BB-\hat{\BB}_n+\aa_n^{-1}\bb_n\big) - \bb_n\big(2n + \alpha + \BB+\aa_n^{-1}\bb_n\big) - \aa_n\BB^*_n\aa_n^{-1}\bb_n \big)\aa_{n-1}, \nonumber\\
\aa_n \BB_n^*\aa_n^{-1}\bb_n(s - \bb_n) = \bb_n(s - \bb_n)\aa_{n-1}^{-1}\BB_{n-1}^*\aa_{n-1},\nonumber\\
\hat{\BB}_n(s - \bb_n) = (s - \bb_n)\aa_{n-1}^{-1}\hat{\BB}_{n-1}\aa_{n-1}.\label{finalsystemdiscrete}
\end{gather}
\end{Proposition}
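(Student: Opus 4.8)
The plan is to produce the four lines of \eqref{finalsystemdiscrete} by assembling the discrete compatibility relations \eqref{dP1}--\eqref{dP5}, the formal monodromy identity \eqref{formalmonodromy}, and the structural identities \eqref{betn}, \eqref{bb*}. The first two lines are essentially rearrangements of relations already in hand, whereas the last two encode the $n$-recursion of the conjugated matrices $\BB_n^*$ and $\hat{\BB}_n$ and require a little more work.

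For the first line I would begin with \eqref{dP2}, use \eqref{bb*} to replace $\gg_n^{-1}\bb_n^*\gg_n$ by $\aa_n^{-1}\bb_n\aa_n$, and substitute $\aalpha_n$ from \eqref{dP1}. Multiplying the resulting identity on the left by $\aa_n$ gives $\aa_n\bb_{n+1}+\bb_n\aa_n = s\aa_n-(2n+\alpha+1)\aa_n^2-\aa_n^3-\aa_n(\BB+\BB_n^*)\aa_n$, and the shift $n\mapsto n-1$ reproduces the first line of \eqref{finalsystemdiscrete}. For the second line I would take \eqref{dP3} with $n\mapsto n-1$, that is $\bb_n^2-s\bb_n=\aa_n\bbeta_n\aa_{n-1}$, and then substitute the monodromy identity \eqref{formalmonodromy} for $\aa_n\bbeta_n$; the long bracket in the second line of \eqref{finalsystemdiscrete} is precisely that expression for $\aa_n\bbeta_n$, so nothing further is needed.

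The harder part will be the third and fourth lines, since these are not consequences of \eqref{dP1}--\eqref{dP5} alone. For $\hat{\BB}_n$ I would use the defining conjugation \eqref{defBn} to write $\hat{\BB}_n = M\hat{\BB}_{n-1}M^{-1}$ with $M:=\hPP_n(s;0)\hPP_{n-1}^{-1}(s;0)$. The key computation is to express $M$ through the dynamical variables: inverting the definition of $\qq_n$ in \eqref{defpq} and combining it with \eqref{defab} gives $\qq_n = 2\pi i\,\gg_n\aa_n^{-1}\bb_n$, whence $M=2\pi i\,\qq_n^{-1}\gg_{n-1}$ together with \eqref{betn} yields $M = \bb_n^{-1}\aa_n\bbeta_n$, and finally \eqref{dP3} (shifted) collapses this to $M = -(s\Id_N-\bb_n)\aa_{n-1}^{-1}$. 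Substituting this $M$ into $\hat{\BB}_n M = M\hat{\BB}_{n-1}$ and multiplying on the right by $\aa_{n-1}$ gives the fourth line. For $\BB_n^*$ I would exploit that $\gg_n$ is Hermitian, so $\BB_n^*=\gg_n^{-1}\BB^*\gg_n$; then \eqref{betn} (in the form $\gg_{n-1}=\gg_n\bbeta_n$) gives the intertwining $\BB_n^*\bbeta_n=\bbeta_n\BB_{n-1}^*$. Feeding this together with the two readings $\aa_n\bbeta_n=-\bb_n(s\Id_N-\bb_n)\aa_{n-1}^{-1}$ and $\bbeta_n\aa_{n-1}=-\aa_n^{-1}\bb_n(s\Id_N-\bb_n)$ of \eqref{dP3} into the right-hand side $\bb_n(s\Id_N-\bb_n)\aa_{n-1}^{-1}\BB_{n-1}^*\aa_{n-1}$ turns it into $\aa_n\BB_n^*\aa_n^{-1}\bb_n(s\Id_N-\bb_n)$, which is the third line.

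The main obstacle is thus the derivation of the transition matrix $M=\hPP_n(s;0)\hPP_{n-1}^{-1}(s;0)$ and the conjugation identities for $\BB_n^*$ and $\hat{\BB}_n$: one must track the conjugation structure carefully and rely throughout on the invertibility of $\aa_n$ and $\bb_n$ (needed for $\qq_n=2\pi i\,\gg_n\aa_n^{-1}\bb_n$ and for $M=\bb_n^{-1}\aa_n\bbeta_n$ to make sense). Once these intertwining relations are established, the four equations follow by the straightforward, if lengthy, substitutions outlined above.
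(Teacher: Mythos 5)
Your proposal is correct and is essentially the paper's own proof: the first two lines are obtained exactly as there (plug \eqref{dP1} into \eqref{dP2} using \eqref{bb*}, then plug \eqref{formalmonodromy} into the shifted \eqref{dP3}), and the last two follow from the same intertwining idea, namely $\BB_n^{*}\bbeta_n=\bbeta_n\BB_{n-1}^{*}$ and $\hat{\BB}_n M=M\hat{\BB}_{n-1}$ with $M=\hPP_n(s;0)\hPP_{n-1}^{-1}(s;0)=\bb_n^{-1}\aa_n\bbeta_n$, followed by elimination of $\bbeta_n$ via \eqref{dP3}. The only cosmetic differences are that you use the adjoint form of the paper's intertwining relation for $\BB_n$ (which plugs directly into the third line) and that you actually derive $M=\bb_n^{-1}\aa_n\bbeta_n$ from \eqref{defpq}, \eqref{defab} and \eqref{betn}, a step the paper states without proof.
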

\begin{proof}For the first one, plug \eqref{dP1} into \eqref{dP2}, multiply on the left by $\aa_n$ and use \eqref{bb*}. The second is a consequence of plugging \eqref{formalmonodromy} into \eqref{dP3}. For the other two equations we start observing that
\begin{gather*}
 \BB_n = \gg_n\BB\gg_n^{-1}= \gg_n\gg_{n-1}^{-1}\BB_{n-1}\gg_{n-1}\gg_n^{-1}= \bbeta_n^{-*}\BB_{n-1}\bbeta_n^{*},
\end{gather*}
which gives
\begin{gather}\label{prethird}
 \bbeta_{n}^*\BB_{n} = \BB_{n-1}\bbeta_{n}^*.
\end{gather}
Analogously, using the definition of $\hat{\BB}_n$ and
\begin{gather*}
\hPP_n(s;0)\hPP_{n-1}^{-1}(s;0) = \bb_n^{-1}(s)\aa_n(s)\bbeta_n(s),
\end{gather*}
we obtain
\begin{gather}\label{prefourth}
 \hat{\BB}_{n}\bb_{n}^{-1}\aa_{n}\bbeta_{n} = \bb_{n}^{-1}\aa_{n}\bbeta_{n}\hat{\BB}_{n-1}.
\end{gather}
Then the last two equations in \eqref{finalsystemdiscrete} are obtained from \eqref{prethird} and~\eqref{prefourth} where~$\bbeta_n$ had been written in function of~$\aa_n$,~$\aa_{n-1}$ and~$\bb_n$ using~\eqref{dP3}.
 \end{proof}

The system above can be reduced to a system of higher order just for the two variab\-les~$\aa_n(s)$,~$\bb_n(s)$. It is practical, for the following, to introduce the following two quantities:
\begin{gather*}
 \CC_n := s\aa_n^{-1} - \aa_n - \aa_n\BB\aa_n^{-1} - \aa_n\bb_{n+1}\aa_n^{-2} - \bb_n\aa_n^{-1}, \\
 \DD_n := (s\Id_N - \bb_n)\big(\BB + \bb_n\aa_{n-1}^{-1}\big) + \big(\Id_N + \aa_n + \aa_n\BB\aa_n{-1}+\aa_n\bb_{n+1}\aa_n^{-2}\big)\bb_n.
\end{gather*}
\begin{Theorem}\label{discreteThm} The variables $\aa_n,\bb_n$ satisfy the $($closed$)$ difference system
\begin{gather}
 \CC_n \bb_n(s\Id_n - \bb_n) - \bb_n(s\Id_N - \bb_n)\aa_{n-1}^{-2}\CC_{n-1}\aa_{n-1}^2 = 2\bb_n(s\Id_N - \bb_n),\nonumber\\
 \DD_n(s\Id_N - \bb_n) - (s\Id_N - \bb_n)\aa_{n-1}^{-1}\DD_{n-1}\aa_{n-1}=s(\bb_n - s\Id_N).\label{discreteanbn}
\end{gather}
\end{Theorem}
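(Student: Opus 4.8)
The plan is to eliminate the two auxiliary matrix variables $\BB_n^*$ and $\hat{\BB}_n$ from the first-order system \eqref{finalsystemdiscrete}, converting its last two equations into the closed second-order system \eqref{discreteanbn}. The whole argument rests on the observation that $\CC_n$ and $\DD_n$ are, up to explicit scalar and affine corrections, nothing but $\aa_n\BB_n^*\aa_n^{-1}$ and $\hat{\BB}_n$ respectively. Once these two ``dictionary'' identities are in place, the two equations of \eqref{discreteanbn} follow by direct substitution into the third and fourth equations of \eqref{finalsystemdiscrete}.

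First I would prove the identity
\begin{gather*}
\CC_n = (2n+\alpha+1)\Id_N + \aa_n\BB_n^*\aa_n^{-1}.
\end{gather*}
To obtain it, I would combine \eqref{dP2} with \eqref{bb*} to write $\aalpha_n\aa_n = s\Id_N - \bb_{n+1} - \aa_n^{-1}\bb_n\aa_n$, hence $\aalpha_n = s\aa_n^{-1} - \bb_{n+1}\aa_n^{-1} - \aa_n^{-1}\bb_n$. Conjugating by $\aa_n$ shows that the three terms $s\aa_n^{-1} - \aa_n\bb_{n+1}\aa_n^{-2} - \bb_n\aa_n^{-1}$ appearing in the definition of $\CC_n$ are exactly $\aa_n\aalpha_n\aa_n^{-1}$, so that $\CC_n = \aa_n\aalpha_n\aa_n^{-1} - \aa_n - \aa_n\BB\aa_n^{-1}$. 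Inserting $\aalpha_n = (2n+\alpha+1)\Id_N + \aa_n + \BB + \BB_n^*$ from \eqref{dP1}, the $\aa_n$ and $\aa_n\BB\aa_n^{-1}$ contributions cancel and the identity falls out. Feeding this into the third equation of \eqref{finalsystemdiscrete} and rewriting $\aa_{n-1}^{-1}\BB_{n-1}^*\aa_{n-1} = \aa_{n-1}^{-2}\CC_{n-1}\aa_{n-1}^2 - (2n+\alpha-1)\Id_N$, the two scalar shifts $(2n+\alpha+1)$ and $(2n+\alpha-1)$ combine to produce the inhomogeneous right-hand side $2\bb_n(s\Id_N-\bb_n)$, which is precisely the first equation of \eqref{discreteanbn}.

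For the second equation I would establish the companion identity $\DD_n = s(\hat{\BB}_n - n\Id_N)$. Using $\aa_n\bb_{n+1}\aa_n^{-2} = s\aa_n^{-1} - \bb_n\aa_n^{-1} - \aa_n\aalpha_n\aa_n^{-1}$ together with the $\CC_n$ identity just proved, one collapses $\Id_N + \aa_n + \aa_n\BB\aa_n^{-1} + \aa_n\bb_{n+1}\aa_n^{-2}$ to $\Id_N + s\aa_n^{-1} - \bb_n\aa_n^{-1} - \CC_n$, thereby simplifying the definition of $\DD_n$. Comparing the outcome with the expression for $s\hat{\BB}_n$ obtained by solving the second equation of \eqref{finalsystemdiscrete} for $\hat{\BB}_n$ (and again using $\aa_n\BB_n^*\aa_n^{-1} = \CC_n - (2n+\alpha+1)\Id_N$) yields the claimed equality term by term. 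Plugging $\hat{\BB}_n = \tfrac1s\DD_n + n\Id_N$ into the fourth equation of \eqref{finalsystemdiscrete}, the $\hat{\BB}$ terms cancel by that equation itself and the affine shift $n\Id_N$ generates the inhomogeneity $s(\bb_n-s\Id_N)$, giving the second equation of \eqref{discreteanbn}.

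The hard part will be the bookkeeping in these two dictionary identities, and especially the one for $\DD_n$, where several non-commuting terms conjugated by $\aa_n$ must be forced to cancel against those emerging from \eqref{formalmonodromy}. A secondary but genuine pitfall is matching the conjugation structure correctly: after substitution the third equation of \eqref{finalsystemdiscrete} carries a \emph{double} conjugation $\aa_{n-1}^{-2}(\cdot)\aa_{n-1}^2$, whereas the fourth carries only $\aa_{n-1}^{-1}(\cdot)\aa_{n-1}$, and it is exactly these different powers that must line up with the $(2n+\alpha\pm1)$ versus $n$ shifts to reproduce the two distinct inhomogeneous terms on the right of \eqref{discreteanbn}. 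Getting an index or a power of $\aa_{n-1}$ wrong here is the most likely source of error.
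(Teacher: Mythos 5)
Your proposal is correct and takes essentially the same route as the paper: the paper likewise solves the first equation of \eqref{finalsystemdiscrete} for $\aa_n\BB_n^*\aa_n$ (equivalent, after multiplying by $\aa_n^{-2}$, to your dictionary identity $\CC_n=(2n+\alpha+1)\Id_N+\aa_n\BB_n^*\aa_n^{-1}$) and substitutes into the third equation, then solves the second equation for $s\hat{\BB}_n=ns\Id_N+\DD_n$ (your identity $\DD_n=s\big(\hat{\BB}_n-n\Id_N\big)$) and substitutes into the fourth. Your intermediate computations, including the conjugation bookkeeping $\aa_{n-1}^{-1}\BB_{n-1}^*\aa_{n-1}=\aa_{n-1}^{-2}\CC_{n-1}\aa_{n-1}^2-(2n+\alpha-1)\Id_N$ and the scalar shifts producing the inhomogeneous terms, all check out.
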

\begin{proof}We start with the equation for $\CC_n$. Using the first relation in \eqref{finalsystemdiscrete} we obtain
\begin{gather*}
 \aa_n\BB_n^*\aa_n = s\aa_n - (2n + \alpha + 1)\aa_n^2 - \aa_n^3 - \aa_n\BB\aa_n - \aa_n\bb_{n+1} - \bb_n\aa_n ,
\end{gather*}
and we plug $\aa_n\BB_n^*\aa_n$ into the third equation of \eqref{finalsystemdiscrete}. For the second equation in \eqref{discreteanbn} we use the fact that
\begin{gather*}s\hat{\BB_n} = ns\Id_N + (s\Id_N - \bb_n)\BB - \big(\bb_n^2 - s\bb_n\big)\aa_{n-1}^{-1} + \big(\Id_N + \aa_n + \aa_n\BB\aa_n^{n-1} + \aa_n\bb_{n+1}\aa_n^{-2}\big)\bb_n,\end{gather*}
and we plug this relation into the last one of \eqref{finalsystemdiscrete}. \end{proof}

From the previous equations we can compute all the coefficients $\aa_n$, $\bb_n$, $\BB_n$, $\hat{\BB}_n$, $\aalpha_n$ and~$\bbeta_n$ in terms only on~$\aa_0(s)$, $\BB$ and $\gg_0$. Indeed, initially we have $\bb_0=0$, $\BB_0=\gg_0\BB\gg_0^{-1}$ and $\hat{\BB}_0=\BB$. From~\eqref{dP1} we can compute~$\aalpha_n$ in terms of~$\aa_n$,~$\BB$ and $\BB_n$ and from~\eqref{dP4e} (or~\eqref{dP4}) we can compute $\bbeta_n$ in terms of $\bb_n, \BB$ and $\aa_k$, $\BB_k$, $k=0,\ldots, n-1$. Therefore it is enough to compute $\aa_n$, $\bb_n$, $\BB_n$, $\hat{\BB}_n$ in terms only on $\aa_0(s)$, $\BB$ and $\gg_0$. For that we iterate the following 4 steps:
\begin{enumerate}\itemsep=0pt
\item From the first equation in \eqref{finalsystemdiscrete} we can compute $\bb_1$ in terms of $\aa_0$, $\BB$ and $\gg_0$.
\item From the fourth equation in \eqref{finalsystemdiscrete} we can compute $\hat{\BB}_1$ in terms of $\bb_1$, $\aa_0$ and $\BB$.
\item From \eqref{prethird} we can compute $\BB_1$ in terms of $\bbeta_1$, $\BB$ and $\gg_0$ (observe here that $\bbeta_1$ is computed from \eqref{dP4e} in terms of $\bb_1$, $\aa_0$ and $\BB$).
\item From \eqref{dP3} we can compute $\aa_1$ in terms of $\bbeta_1$, $\aa_0$, $\bb_1$ and $\BB$.
\end{enumerate}
Actually $\gg_0$ can be avoided if we normalize the weight matrix $\WW$ such that $\gg_0=\Id_N$. Since $\BB$ is a fixed matrix, in order to compute $\aa_0(s)$ we use \eqref{ans1} for $n=0$, i.e.,
\begin{gather*}
\aa_0(s)=s\left(\int_0^{\infty}\frac{\WW(s;y)}{y}\d y\right)\left(\int_0^{\infty}\WW(s;y)\d y\right)^{-1}.
\end{gather*}
Since $\WW(s;x)=x^\alpha{\rm e}^{-x - s/x}\TT(x)\TT^*(x)$, where $\TT(x)$ is typically a matrix polynomial, we have that $\aa_0(s)$ can be computed using the very well known formula
\begin{gather*}
\int_0^{\infty}x^{\alpha+k-1}{\rm e}^{-x - s/x}\d x=2\big(\sqrt{s}\big)^{\alpha+k}K_{\alpha+k}\big(2\sqrt{s}\big),\qquad k\geq1,
\end{gather*}
where $K_\nu(z)$ is the MacDonald function of the second type.

Let us now study the continuous version, i.e., a couple of non-linear non-commutative second-order differential equations for the coefficients $\aa_n(s)$ and $\bb_n(s)$.

\begin{Proposition} The variables $\aa_n$, $\bb_n$, $\BB_n$, $\hat{\BB}_n$ defined in \eqref{defab} and \eqref{defBn} satisfy the $($closed$)$ system of first-order differential equations
 \begin{gather}
s\dot{\aa}_n = - s\Id_N + \bb_n+\big((2n + \alpha + 1)\Id_N + \BB + \aa_n+\aa_n^{-1}\bb_n\big)\aa_n + \aa_n\BB_n^*, \nonumber\\
s\dot{\bb}_n = \bb_n\big((2n + \alpha + 1)\Id_N + \BB+\aa_n^{-1}\bb_n\big) - s\big(2 \aa_n^{-1}\bb_n + n\Id_N+ \BB-\hat{\BB}_n\big) \nonumber\\
\hphantom{s\dot{\bb}_n =}{} + \aa_n^{-1}\bb_n^2+ [\BB,\bb_n] + \aa_n\BB^*_n\aa_n^{-1}\bb_n,\nonumber\\
 s\dot{\BB}_n = [\aa_n^*,\BB_n],\nonumber\\
 s\dot{\hat{\BB}}_n = \big[\aa_n^{-1}\bb_n + \BB, \hat{\BB}_n\big]. \label{finalsystem}
\end{gather}
\end{Proposition}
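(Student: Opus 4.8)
The plan is to prove the four equations of \eqref{finalsystem} one at a time, taking advantage of the fact that all the hard analytic input is already available: the Painlev\'e-type relations \eqref{P1}--\eqref{P4} coming from the compatibility condition \eqref{PIII}, the formal monodromy identity \eqref{formalmonodromy} of Proposition~\ref{propg}, and the logarithmic-derivative formula \eqref{Pdot}. The first two equations then reduce to algebraic substitution. For the $\dot{\aa}_n$ equation I would start from \eqref{P3} and replace the non-closed term $\gg_n^{-1}\bb_n^*\gg_n$ by $\aa_n^{-1}\bb_n\aa_n$ using \eqref{bb*}; collecting $(2n+\alpha+1)\aa_n+\aa_n^2+\BB\aa_n+\aa_n^{-1}\bb_n\aa_n$ into the factored form $\big((2n+\alpha+1)\Id_N+\BB+\aa_n+\aa_n^{-1}\bb_n\big)\aa_n$ produces exactly the first line. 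For the $\dot{\bb}_n$ equation I would begin from \eqref{P4}, whose only term not already expressed through $\aa_n,\bb_n,\BB_n,\hat{\BB}_n$ is $\aa_n\bbeta_n$, and substitute the expression for $\aa_n\bbeta_n$ given by \eqref{formalmonodromy}. After expanding $\aa_n^{-1}\bb_n(s\Id_N-\bb_n)$ and merging the two $s\aa_n^{-1}\bb_n$ contributions into $-2s\aa_n^{-1}\bb_n$, the remaining terms reorganize into the second line; this step is lengthy but purely mechanical.

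The third equation requires one genuine observation rather than substitution. Differentiating the definition $\BB_n=\gg_n\BB\gg_n^{-1}$ of \eqref{defBn} and inserting $s\dot{\gg}_n=\gg_n\aa_n$ from \eqref{P1} gives $s\dot{\BB}_n=\gg_n[\aa_n,\BB]\gg_n^{-1}$. To recognise the right-hand side as $[\aa_n^*,\BB_n]$, I would use the conjugation identity $\aa_n^*=\gg_n\aa_n\gg_n^{-1}$. This follows immediately from the definition $\aa_n=2\pi i s\,\pp_n\gg_n$ in \eqref{defab} together with the Hermiticity of $\gg_n$ and the skew-Hermiticity of $\pp_n$ from Corollary~\ref{cor1}: indeed $\aa_n^*=2\pi i s\,\gg_n\pp_n=\gg_n\big(2\pi i s\,\pp_n\gg_n\big)\gg_n^{-1}$. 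Substituting this into both summands of $[\aa_n^*,\BB_n]$ reproduces $\gg_n[\aa_n,\BB]\gg_n^{-1}$, closing the equation. Spotting this conjugation symmetry is the only non-routine step in the argument.

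The fourth equation is analogous but easier. Differentiating $\hat{\BB}_n=\hPP_n(s;0)\BB\hPP_n^{-1}(s;0)$ from \eqref{defBn} yields $s\dot{\hat{\BB}}_n=\big[\,s\dot{\hPP}_n(s;0)\hPP_n^{-1}(s;0),\,\hat{\BB}_n\,\big]$, and inserting the logarithmic derivative \eqref{Pdot} while discarding the terms $n\Id_N$ and $-\hat{\BB}_n$, which commute with $\hat{\BB}_n$, leaves precisely $\big[\aa_n^{-1}\bb_n+\BB,\hat{\BB}_n\big]$. The main obstacle of the whole proposition is therefore not in this assembly at all but in the ingredients supplied upstream: the closure of the $\dot{\bb}_n$ equation depends entirely on the formal monodromy identity \eqref{formalmonodromy}, without which $\bbeta_n$ could not be eliminated in favour of $\aa_n,\bb_n,\BB_n,\hat{\BB}_n$, and the closure of the $\dot{\BB}_n$ equation depends on the conjugation relation $\aa_n^*=\gg_n\aa_n\gg_n^{-1}$. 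Granting these two facts, the proposition is a matter of careful bookkeeping.
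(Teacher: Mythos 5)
Your proposal is correct and follows essentially the same route as the paper's own proof: rewrite \eqref{P3} via \eqref{bb*}, plug \eqref{formalmonodromy} into \eqref{P4}, and obtain the last two equations by differentiating the definitions \eqref{defBn} and inserting \eqref{P1} (together with $\gg_n\aa_n=\aa_n^*\gg_n$) and \eqref{Pdot}, respectively. The only difference is that you explicitly derive the conjugation identity $\aa_n^*=\gg_n\aa_n\gg_n^{-1}$ from the skew-Hermiticity of $\pp_n$, a fact the paper invokes without proof.
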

\begin{proof}The first equation is just a rewriting of the equation \eqref{P3} using \eqref{bb*}, while the second equation is obtained plugging \eqref{formalmonodromy} into \eqref{P4}. For the third equation, we observe that, from the definition~\eqref{defBn}, we have that
\begin{gather*}
\dot{\BB}_n = \big[\dot{\gg}_n\gg_n^{-1},\BB_n\big],
\end{gather*}
and then one has simply to use \eqref{P1} and $\gg_n\aa_n=\aa_n^*\gg_n$. For the last equation, the derivation is similar, but this time one has to use the equation \eqref{Pdot} instead of \eqref{P1}. \end{proof}

Let us obtain now a closed system of second-order differential equations for the (matrix) variables $\aa_n$ and $\bb_n$. First, from the first two equations of \eqref{finalsystem}, we can write $\BB_n$ and $\hat{\BB}_n$ in function of the variables $\aa_n$, $\bb_n$ and their derivatives. Indeed, the first equation gives
\begin{gather}\label{Bns}
\BB_n^*=\aa_n^{-1}\big(s\dot{\aa}_n +s\Id_N - \bb_n-\big((2n + \alpha + 1)\Id_N + \BB + \aa_n+\aa_n^{-1}\bb_n\big)\aa_n \big).
\end{gather}
If we substitute \eqref{Bns} into the second equation of \eqref{finalsystem}, and after some computations, we get
\begin{gather}\label{Bng}
\hat{\BB}_n=\dot{\bb}_n+n\Id_N+\BB+\aa_n^{-1}\bb_n-\dot{\aa}_n\aa_n^{-1}\bb_n+\frac{\aa_n\bb_n}{s}.
\end{gather}
Plugging these two relations into the last two equations of \eqref{finalsystem} gives the following Theorem.

\begin{Theorem}\label{continuousThm}
The variables $\aa_n$, $\bb_n$ defined in~\eqref{defab} satisfy the $($closed$)$ system of second-order differential equations
\begin{gather}
\ddot{\aa}_n= \dot{\aa}_n\aa_n^{-1}\dot{\aa}_n+\dot{\aa}_n\aa_n^{-1} -\frac{1}{s^2}\big(\big[\BB\aa_n,\aa_n\big]+\big[\aa_n^{-1}\bb_n,\aa_n^2\big]\big) +\frac{1}{s}\big(\dot{\bb}_n-\dot{\aa}_n+\aa_n^{-1}\dot{\bb}_n\aa_n\nonumber \\
\hphantom{\ddot{\aa}_n=}{}+\dot{\aa}_n\aa_n-\big(\dot{\aa}_n\aa_n^{-2} +\aa_n^{-1}\dot{\aa}_n\big)\bb_n\aa_n+\BB\dot{\aa}_n
 -\dot{\aa}_n\aa_n^{-1}\BB\aa_n+\big[\aa_n^{-1}\bb_n,\dot{\aa}_n\big]-\Id_N\big)\label{F1eq}
\end{gather}
and
\begin{gather}
\ddot{\bb}_n = \big(\dot{\aa}_n\aa_n^{-1}+\aa_n^{-1}\dot{\aa}_n\big)\aa_n^{-1}\bb_n+\dot{\aa}_n\aa_n^{-1}\dot{\bb}_n-\aa_n^{-1}\dot{\bb}_n +\frac{1}{s^2}\aa_n (\bb_n+ [\BB,\bb_n ] )\nonumber\\
\hphantom{\ddot{\bb}_n =}{} -\frac{1}{s}\big(\aa_n\dot{\bb}_n+ [\dot{\bb}_n,\BB ]+\dot{\aa}_n\aa_n^{-1}\big[\BB,\bb_n\big] -\aa_n^{-1}(\dot{\bb}_n\bb_n+\bb_n\dot{\bb}_n) \nonumber\\
\hphantom{\ddot{\bb}_n =}{}+ \big(\dot{\aa}_n\aa_n^{-2}+\aa_n^{-1}\dot{\aa}_n\aa_n^{-1}\big)\bb_n^2+\dot{\aa}_n\aa_n^{-1}\bb_n+\aa_n^{-1}\bb_n\big).\label{F2eq}
\end{gather}
\end{Theorem}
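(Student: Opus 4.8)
The plan is to regard \eqref{finalsystem} as a first-order system in the four unknowns $\aa_n$, $\bb_n$, $\BB_n$, $\hat{\BB}_n$, and to eliminate the two matrices $\BB_n$ and $\hat{\BB}_n$ by means of the algebraic relations \eqref{Bns} and \eqref{Bng}, which already express $\BB_n^*$ and $\hat{\BB}_n$ purely in terms of $\aa_n$, $\bb_n$ and their first $s$-derivatives. Once $\BB_n^*$ and $\hat{\BB}_n$ are recognised as known functions of $(\aa_n,\bb_n,\dot{\aa}_n,\dot{\bb}_n)$, the remaining two equations of \eqref{finalsystem} — the ones controlling $\dot{\BB}_n$ and $\dot{\hat{\BB}}_n$ — are no longer needed to close the first-order system. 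Instead, I would differentiate \eqref{Bns} and \eqref{Bng} once more in $s$ and substitute the results into those two equations. Each substitution produces a single relation in which the only second derivatives that occur are $\ddot{\aa}_n$, respectively $\ddot{\bb}_n$; isolating them gives \eqref{F1eq} and \eqref{F2eq}.

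For \eqref{F1eq}, I would first recast the third equation of \eqref{finalsystem} in terms of $\BB_n^*$ rather than $\BB_n$: since $*$ commutes with $\pa_s$ for real $s$ and Hermitian conjugation reverses the order of a commutator, $s\dot{\BB}_n=[\aa_n^*,\BB_n]$ becomes $s\,\pa_s(\BB_n^*)=[\BB_n^*,\aa_n]$. Writing $K:=\aa_n\BB_n^*$ (so that $\BB_n^*=\aa_n^{-1}K$ is the content of \eqref{Bns}), the right-hand side is $\aa_n^{-1}K\aa_n-K$. Differentiating $\BB_n^*=\aa_n^{-1}K$, the only term carrying a second derivative is $\pa_s(\aa_n^{-1}\,s\dot{\aa}_n)$, whose top piece is $s\,\aa_n^{-1}\ddot{\aa}_n$, so that $s\,\pa_s(\BB_n^*)$ contains the single term $s^2\aa_n^{-1}\ddot{\aa}_n$. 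Equating the two sides, replacing $\BB_n^*$ on the right by \eqref{Bns} once more, and multiplying on the left by $s^{-2}\aa_n$ then isolates $\ddot{\aa}_n$. The leading piece $\dot{\aa}_n\aa_n^{-1}\dot{\aa}_n+\dot{\aa}_n\aa_n^{-1}$ of \eqref{F1eq} springs from the term $s^{-1}\dot{\aa}_n\aa_n^{-1}K$, in which the summands $s\dot{\aa}_n$ and $s\Id_N$ hidden inside $K$ absorb the explicit $s^{-1}$; the commutators $[\BB\aa_n,\aa_n]$ and $[\aa_n^{-1}\bb_n,\aa_n^2]$ at order $s^{-2}$ assemble from $s^{-2}\aa_n(\aa_n^{-1}K\aa_n-K)$ together with the $s^{-1}$-terms coming from $\pa_s(M\aa_n)$, where $M:=(2n+\alpha+1)\Id_N+\BB+\aa_n+\aa_n^{-1}\bb_n$.

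For \eqref{F2eq}, I would proceed symmetrically, differentiating \eqref{Bng} and feeding the outcome into the fourth equation $s\dot{\hat{\BB}}_n=[\aa_n^{-1}\bb_n+\BB,\hat{\BB}_n]$ of \eqref{finalsystem}. The mechanism is slightly cleaner here because $\dot{\bb}_n$ enters \eqref{Bng} linearly with unit coefficient, so the unique second-derivative contribution on the left-hand side is $s\ddot{\bb}_n$ and $\ddot{\bb}_n$ is isolated merely by dividing by $s$ (no left multiplication by $\aa_n$ is required). The terms of \eqref{F2eq} of order $s^{-2}$, namely $s^{-2}\aa_n(\bb_n+[\BB,\bb_n])$, originate from differentiating the explicit factor $s^{-1}\aa_n\bb_n$ in \eqref{Bng} and from the action of the commutator $[\aa_n^{-1}\bb_n+\BB,\,\cdot\,]$ on that same factor.

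The genuinely delicate part is not conceptual but organisational. Every intermediate quantity is non-commutative, so each product must be kept in its correct left/right order, each inverse differentiates as $\pa_s(\aa_n^{-1})=-\aa_n^{-1}\dot{\aa}_n\aa_n^{-1}$, and the numerous contributions of orders $s^{0}$, $s^{-1}$ and $s^{-2}$ must be gathered so that the various commutators in \eqref{F1eq}--\eqref{F2eq} reconstitute themselves exactly. I expect this collection of terms — rather than any single conceptual step — to be the main obstacle, and I would lean throughout on the already-established identities, in particular $\gg_n\aa_n=\aa_n^*\gg_n$, the skew-Hermiticity of $\pp_n,\qq_n$, and \eqref{bb*}, to guarantee that the eliminated matrices $\BB_n^*$ and $\hat{\BB}_n$ are consistent real matrix functions, so that the resulting second-order system is genuinely closed in $\aa_n$ and $\bb_n$ alone.
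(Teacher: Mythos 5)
Your overall route is the same as the paper's: regard \eqref{finalsystem} as a first-order system, use its first two equations in the solved forms \eqref{Bns} and \eqref{Bng} to express $\BB_n^*$ and $\hat{\BB}_n$ through $\aa_n$, $\bb_n$, $\dot{\aa}_n$, $\dot{\bb}_n$, and substitute into the remaining two differential equations. Your treatment of \eqref{F1eq} is correct, including the necessary preliminary step of conjugating the third equation of \eqref{finalsystem} to $s\,\pa_s(\BB_n^*)=[\BB_n^*,\aa_n]$ so that it closes on the un-conjugated variables; this is precisely the computation the paper compresses into one sentence, and your bookkeeping of where the leading terms $\dot{\aa}_n\aa_n^{-1}\dot{\aa}_n+\dot{\aa}_n\aa_n^{-1}$ come from is accurate.

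The derivation of \eqref{F2eq}, however, contains a genuine error. You assert that, after substituting the $s$-derivative of \eqref{Bng} into $s\dot{\hat{\BB}}_n=\big[\aa_n^{-1}\bb_n+\BB,\hat{\BB}_n\big]$, ``the unique second-derivative contribution on the left-hand side is $s\ddot{\bb}_n$''. This is false: \eqref{Bng} contains the term $-\dot{\aa}_n\aa_n^{-1}\bb_n$, whose derivative is $-\ddot{\aa}_n\aa_n^{-1}\bb_n+\dot{\aa}_n\aa_n^{-1}\dot{\aa}_n\aa_n^{-1}\bb_n-\dot{\aa}_n\aa_n^{-1}\dot{\bb}_n$, and the first of these is a second-derivative term that does not vanish in general ($\bb_n\neq 0$ for $n\geq 1$, $s>0$). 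Hence dividing by $s$ does not isolate $\ddot{\bb}_n$; the equation you obtain still involves $\ddot{\aa}_n$ and is therefore not closed in the form \eqref{F2eq}. The missing step --- which the paper states explicitly (``use again \eqref{F1eq}'') --- is to substitute \eqref{F1eq} for $\ddot{\aa}_n$ in that term. This substitution is not cosmetic: the leading piece $\dot{\aa}_n\aa_n^{-1}\dot{\aa}_n$ of \eqref{F1eq} produces $-\dot{\aa}_n\aa_n^{-1}\dot{\aa}_n\aa_n^{-1}\bb_n$, which cancels the product-rule term $+\dot{\aa}_n\aa_n^{-1}\dot{\aa}_n\aa_n^{-1}\bb_n$ above; this cancellation is exactly why no term quadratic in $\dot{\aa}_n$ survives in \eqref{F2eq}. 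With this correction inserted, your argument goes through and coincides with the paper's proof.
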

\begin{proof}The first equation \eqref{F1eq} follows after some long but straightforward computations plugging~\eqref{Bns} into the third equation of~\eqref{finalsystem}. For the second equation~\eqref{F2eq}, we plug~\eqref{Bng} into the fourth equation of~\eqref{finalsystem} and use again~\eqref{F1eq}.
 \end{proof}

The initial conditions for the couple of second-order differential equations \eqref{F1eq} and \eqref{F2eq} are given by $\aa_n(0)=\bb_n(0)=0$ and, differentiating \eqref{ans1} and \eqref{bns1} with respect to $s$, we get
\begin{gather}
\label{iniconda} \dot{\aa}_n(0) = \left(\int_0^{\infty}\frac{\hPP_n(y)\WW(y)\hPP_n^*(y)}{y}\d y\right)\left(\int_0^{\infty}\hPP_n(y)\WW(y)\hPP_n^*(y)\d y\right)^{-1},\\
\label{inicondb} \dot{\bb}_n(0) = \left(\int_0^{\infty}\frac{\hPP_n(y)\WW(y)\hPP_{n-1}^*(y)}{y}\d y\right)\left(\int_0^{\infty}\hPP_{n-1}(y)\WW(y)\hPP_{n-1}^*(y)\d y\right)^{-1}.
\end{gather}
Observe here that the monic matrix-valued orthogonal polynomials $\hPP_n(x)$ and the weight matrix $\WW(x)$ are evaluated at $s=0$. Since in this case $\WW(x)=x^\alpha{\rm e}^{-x}\TT(x)\TT^*(x)$, where $\TT(x)$ is typically a matrix polynomial, and $\hPP_n(x)$ can be expressed typically in terms of Laguerre polynomials (see for instance~\cite{DL}), it will be possible to compute~\eqref{iniconda} and~\eqref{inicondb} from the formula
\begin{gather*}
\int_0^{\infty}\hat{L}_n^{(a_1)}(x)\hat{L}_{m}^{(a_2)}(x)x^{\sigma-1}{\rm e}^{-x}\d x\\
\qquad{} =(-1)^{n+m}\Gamma(\sigma)(a_1+1)_n(a_2+1)_m\sum_{i=0}^n\sum_{j=0}^m\frac{(\sigma)_{i+j}(-n)_i(-m)_j}{(a_1+1)_i(a_2+1)_ji!j!},
\end{gather*}
for $\Re(\sigma)>0$, where here $\hat{L}_n^{(\alpha)}(x)$ denotes the monic Laguerre polynomial and $(a)_k$ is the Pochhammer symbol. A more general formula of this type can be found in~\cite{Erd}, where the right-hand part can be written in terms of Appell hypergeometric series.

\begin{Remark}If we assume that all the coefficients are scalar functions (denoted without boldfaced fonts), then we have that $\BB=\BB_n^*=\hat{\BB}_n=0$. From the first two equations of~\eqref{finalsystem} we can get $b_n$, $\dot{b}_n$ and $\ddot{b}_n$ in terms of~$a_n$ and $\dot{a}_n$ (just like in the scalar case, see formulas~(3.10) and~(3.11) of~\cite{ChenIts}). Once we substitute all these values in the second-order differential equation~\eqref{F1eq} we get the well-known version of the Painlev\'e~III equation
\begin{gather*}
\ddot{a}_n=\frac{(\dot{a}_n)^2}{a_n}-\frac{\dot{a}_n}{s}+(2n+\alpha+1)\frac{a_n^2}{s^2}+\frac{a_n^3}{s^2}+\frac{\alpha}{s}-\frac{1}{a_n}.
\end{gather*}
Additionally, if we make again all the substitutions and using this previous Painlev\'e III equation, it is easy to see that the second-order differential equation~\eqref{F2eq} also holds.
\end{Remark}

\section{Additional relations for special situations}\label{SEC4}

The differential relations obtained from the transformation \eqref{ssi} of the Riemann--Hilbert problem are not unique in the sense that we could have always introduced an unitary matrix-valued function~$\bm S(z)$ inside the factorization of the weight matrix~$\WW(s;z)$ of the form
\begin{gather*}
\WW(s;z) := z^\alpha{\rm e}^{-z - s/z}\TT(z)\bm S(z)\bm S^*(z)\TT^*(z).
\end{gather*}
Although the weight matrix is the same and the solution of the Riemann--Hilbert problem is unique, if we perform the transformation~\eqref{RRi} with $\TT\SS$ instead of~$\TT$ then we should expect a new Lax pair with additional information. This was already pointed out in \cite[Section~3]{OPRH}, where the authors applied this approach to simplify considerably the differential relations for some Hermite-type matrix-valued examples. The situation in the scalar case is pointless and it is not possible to get new relations from this approach (see \cite[Proposition~3.9]{OPRH}). Nevertheless in order to produce new interesting relations, we have to take an specific choice of the matrix~$\BB$.

Let us call $\bm\chi(z):=(\partial_z \bm S(z)) \bm S^*(z)$ the log-derivative of $\SS$ . Then the log-derivative of $\TT\SS$ is given by
\begin{gather*}
 \big(\partial_z ( \TT(z)\SS(z) )\big) \SS^*(z)\TT^{-1}(z) = \frac{\BB}z+z^{\BB}\bm\chi(z)z^{-\BB}.
\end{gather*}
Denoting $\HH(z):=z^{\BB}\bm\chi(z)z^{-\BB}$, if we manage to find a matrix-valued unitary function~$\SS$ such that $z\HH(z)$ is a matrix polynomial, then it will be possible to compute explicitly the new coefficient $\cAA^{(n)}(s;z)$ in~\eqref{Abs}, and we can get new compatibility conditions. In this case we have
\begin{gather*}
\HH(z)=\bm\chi(z)+\operatorname{ad}_{\BB}(\bm\chi(z))\log z+\operatorname{ad}^2_{\BB}(\bm\chi(z))\frac{\log^2z}{2}+\cdots=\sum_{k=0}^{\infty}\operatorname{ad}^k_{\BB}(\bm\chi(z))\frac{\log^kz}{k!},
\end{gather*}
where $\operatorname{ad}_{\bm X}(\bm Y)$ is the commutator $\big[\bm X,\bm Y\big]$. Here we define recursively $\operatorname{ad}^{n+1}_{\bm X}(\bm Y)\!=\!\operatorname{ad}_{\bm X}(\operatorname{ad}^n_{\bm X}(\bm Y))$ for $n\geq1$ with $\operatorname{ad}^0_{\bm X}(\bm Y)=\bm Y$. One possible choice was given in \cite[Section~6.2]{DG1}. Consider $\BB$ and $\BB_0$ satisfying $ [ \BB,\BB_0 ]=\BB_0$ and that $\BB^2+\alpha\BB-\BB_0$ is Hermitian. Then $\BB$ and $\BB_0$ have a~special structure. In this case, it was proven in \cite{DG1} that $\bm B=\bm Z\bm J\bm Z^{-1}$ and $\bm B_0=\bm Z\bm L\bm Z^{-1}$, where $\bm L$ and $\bm J$ are the nilpotent and diagonal matrices given by
\begin{gather*}
 \bm L =\sum_{k=1}^{N-1}\nu_k\bm E_{k,k+1},\qquad\nu_k\in\mathbb{C}\setminus\{0\}, \qquad \bm J=\sum_{k=1}^N(N-k)\bm E_{k,k},
\end{gather*}
and $\bm Z$ is the transformation matrix given by
\begin{gather*}
\bm Z=(z_{ij})_{i,j=1,\ldots N},\qquad z_{ij}=
 \begin{cases}
 0, & \hbox{if $i>j$;} \\
 1, & \hbox{if $i=j$;} \\
 \displaystyle\prod_{l=1}^{j-i} \frac{\nu_{i+l-1}}{c_{i+l}-c_i}, & \hbox{if $i<j$,}
 \end{cases}
\end{gather*}
where $c_i$, $i=1,\ldots,N,$ are the diagonal entries of $\bm J^2+\alpha \bm J$. Here $\bm E_{ij}$ is a matrix with~1 at entry~$(i,j)$ and~0 elsewhere. Observe now that $\bm B$ \emph{is not any matrix} and it only depends on $N-1$ free parameters. For instance, for $N=2$ we have only one free parameter and
\begin{gather*}
\BB=\left( \begin{matrix}
 1& -\dfrac{\nu_1}{\alpha+1}\\
 0&0
 \end{matrix}\right),\qquad \BB_0=\left( \begin{matrix}
 0& \nu_1\\
 0&0
 \end{matrix}\right).
\end{gather*}
The weight matrix $\WW(s;x)$ is given in this case by
\begin{gather*}
\WW(s;x)=x^\alpha{\rm e}^{-x - s/x}\left( \begin{matrix}
 x^2+\dfrac{\nu^2(x-1)^2}{(\alpha+1)^2}& -\dfrac{\nu(x-1)}{\alpha+1}\vspace{1mm}\\
 -\dfrac{\nu(x-1)}{\alpha+1}&1
 \end{matrix}\right),\qquad x\in[0,\infty),\qquad \alpha,s>0.
\end{gather*}

\looseness=-1 Let $\SS(z)=z^{i(\BB^2+\alpha\BB-\BB_0)}$. Observe that, since $\BB^2+\alpha\BB-\BB_0$ is Hermitian we have that~$\SS(z)$ is a unitary matrix-valued function. Therefore, the matrix-valued function~$\bm\chi(z)$ is then given by
\begin{gather*}
\bm\chi(z)=i\left(\frac{\BB^2+\alpha\BB-\BB_0}{z}\right).
\end{gather*}
A straightforward computation using that $[ \BB,\BB_0]=\BB_0$ gives
\begin{gather*}
\bm H(z)=z^{\BB}\bm\chi(z)z^{-\BB}=i\left(\frac{\BB^2+\alpha\BB}{z}-\BB_0\right).
\end{gather*}
Therefore, the first equation in the Lax triple \eqref{Laxtriple} can be written as
\begin{gather*}
 \frac{\pa}{\pa z}\Ppsi^{(n)}(s;z) = \big(\cAA^{(n)}(s;z)+\cAA^{(n)}_{\HH}(s;z) \big)\Ppsi^{(n)}(s;z),
\end{gather*}
where $\cAA^{(n)}(s;z)$ is defined in \eqref{Abs} and
\begin{gather*}
\cAA^{(n)}_{\HH}(s;z)=\HH_0^{(n)}+\frac{\HH_{-1}^{(n)}(s)}{z},
\end{gather*}
where
\begin{gather*}
\HH_0^{(n)}=-i\left(\begin{matrix}
 \BB_0 & 0\\
 0& \BB_0^*
 \end{matrix}\right)
\end{gather*}
and
\begin{gather*}
\HH_{-1}^{(n)}(s)=i\left(\begin{matrix}
 \BB^2+\alpha\BB+ [\BB_0,\aa_{n,n-1}(s)] & \dfrac{1}{2\pi i}\gg_{n}^{-1}(\BB_0^*-\LL_n(s))\vspace{1mm}\\
 -2\pi i(\BB_0^*-\bm L_{n-1}(s))\gg_{n-1}& \big(\BB^2+\alpha\BB+ [\BB_0,\aa_{n,n-1}(s) ]\big)^*
 \end{matrix}\right).
\end{gather*}
Here we are using the notation
\begin{gather*}
\bm L_n(s)=\gg_n(s)\BB_0\gg_n^{-1}(s).
\end{gather*}
From the first compatibility condition \eqref{PIII} we get a new relation
\begin{gather*}
\partial_s\cAA^{(n)}_{\HH}(s;z)+\big[\cAA^{(n)}_{\HH}(s;z),\cBB^{(n)}(s;z)\big]=0,
\end{gather*}
while form the second compatibility condition \eqref{dPIII} we get
\begin{gather*}
\UU^{(n)}(s;z)\cAA^{(n)}_{\HH}(s;z)=\cAA^{(n+1)}_{\HH}(s;z)\UU^{(n)}(s;z).
\end{gather*}
The first one gives two new relations
\begin{gather*}
\aa_n\bbeta_n(\BB_0-\bm L_{n-1}^*)+(\BB_0-\bm L_{n}^*)\aa_n^{-1}\bb_n(s\Id_N-\bb_n)=\big[\BB^2+\alpha\BB+\big[\BB_0,\bm a_{n,n-1}\big],\bb_n\big],
\end{gather*}
and
\begin{gather*}
(s\Id_N-\bb_n)(\BB_0-\bm L_{n}^*) -(\BB_0-\bm L_{n}^*)\aa_n^{-1}\bb_n\aa_n=\big(\BB^2+\alpha\BB+ [\BB_0,\bm a_{n,n-1} ]\big)\aa_n\\
\qquad{} -\aa_n\big(\BB_n^2+\alpha\BB_n+\big[\bm L_n,\gg_n\bm a_{n,n-1}\gg_n^{-1}\big]\big)^*.
\end{gather*}
The second one gives another two new relations
\begin{gather*}
(\BB_0-\bm L_{n+1}^*)\bbeta_{n+1}-\bbeta_n(\BB_0-\bm L_{n-1}^*)=\big[\BB^2+\alpha\BB+ [\BB_0,\bm a_{n,n-1} ],\aalpha_n\big]+ [\aalpha_n,\BB_0\aalpha_n ],
\end{gather*}
and
\begin{gather*}
\BB^2+\alpha\BB+ [\BB_0,\bm a_{n,n-1} ]-\big(\BB_n^2+\alpha\BB_n+\big[\bm L_n,\gg_n\bm a_{n,n-1}\gg_n^{-1}\big]\big)^*=\BB_0\aalpha_n-\aalpha_n\bm L_n^*.
\end{gather*}
Also from Proposition \ref{propg} we can get two new relations
\begin{gather*}
\aa_n\bbeta_n(\BB_0-\bm L_{n-1}^*)-(s\Id_N-\bb_n)(\BB_0-\bm L_{n}^*)\aa_n^{-1}\bb_n+s\big(\hat{\BB}_n^2+\alpha\hat{\BB}_n\big)\\
=(s\Id_N-\bb_n)\big(\BB^2+\alpha\BB+ [\BB_0,\bm a_{n,n-1} ]\big)+\aa_n\big(\BB_n^2+\alpha\BB_n+\big[\bm L_n,\gg_n\bm a_{n,n-1}\gg_n^{-1}\big]\big)^*\aa_n^{-1}\bb_n,
\end{gather*}
and
\begin{gather*}
\aa_n^{-1}\bb_n\big(\BB^2+\alpha\BB+ [\BB_0,\bm a_{n,n-1} ]\big)-\big(\BB_n^2+\alpha\BB_n+\big[\bm L_n,\gg_n\bm a_{n,n-1}\gg_n^{-1}\big]\big)^*\aa_n^{-1}\bb_n\\
\qquad{} +\bbeta_n(\BB_0-\bm L_{n-1}^*)+\aa_n^{-1}\bb_n(\BB_0-\bm L_{n}^*)\aa_n^{-1}\bb_n=0.
\end{gather*}
Combining these last two ones we get the reduction
\begin{gather*}
\hat{\BB}_n^2+\alpha\hat{\BB}_n=\BB^2+\alpha\BB+ [\BB_0,\bm a_{n,n-1} ]+(\BB_0-\bm L_{n}^*)\aa_n^{-1}\bb_n.
\end{gather*}

\subsection*{Acknowledgements}

M.C.~acknowledges the financial support of the Universidad Nacional Aut\'onoma de M\'exico (UNAM) and the Unit\'e Mixte International (UMI) ``Laboratoire Solomon Lefschetz'', and thanks their staff for the hospitality during his stay in Mexico. We both acknowledge the financial support of the Instituto de Ciencias Matem\'aticas (ICMAT) for our stay in Madrid during the thematic program ``Orthogonal polynomials and special functions in Mathematical Physics and Approximation Theory'', and we are particularly grateful to David G\'omez-Ullate for his invitation to participate. Finally, the work of the first author is also supported by the project IPaDEGAN (H2020-MSCA-RISE-2017), grant number 778010 (European Union), and the work of the second one by PAPIIT-DGAPA-UNAM grant IA102617 (Mexico).

\pdfbookmark[1]{References}{ref}
\LastPageEnding


\begin{thebibliography}{99}
\footnotesize\itemsep=0pt

\bibitem{AvMV}
Adler M., van Moerbeke P., Vanhaecke P., Moment matrices and multi-component
 {KP}, with applications to random matrix theory, \href{https://doi.org/10.1007/s00220-008-0676-1}{\textit{Comm. Math. Phys.}}
 \textbf{286} (2009), 1--38, \href{https://arxiv.org/abs/math-ph/0612064}{math-ph/0612064}.

\bibitem{AFM}
\'Alvarez-Fern\'andez C., Ma\~nas M., Orthogonal {L}aurent polynomials on the
 unit circle, extended {CMV} ordering and 2{D} {T}oda type integrable
 hierarchies, \href{https://doi.org/10.1016/j.aim.2013.02.020}{\textit{Adv. Math.}} \textbf{240} (2013), 132--193,
 \href{https://arxiv.org/abs/1202.2898}{arXiv:1202.2898}.

\bibitem{AM1}
Ariznabarreta G., Ma\~nas M., Matrix orthogonal {L}aurent polynomials on the
 unit circle and {T}oda type integrable systems, \href{https://doi.org/10.1016/j.aim.2014.06.019}{\textit{Adv. Math.}}
 \textbf{264} (2014), 396--463, \href{https://arxiv.org/abs/1312.0150}{arXiv:1312.0150}.

\bibitem{AM2}
Ariznabarreta G., Ma\~nas M., Multivariate orthogonal polynomials and
 integrable systems, \href{https://doi.org/10.1016/j.aim.2016.06.029}{\textit{Adv. Math.}} \textbf{302} (2016), 628--739,
 \href{https://arxiv.org/abs/1409.0570}{arXiv:1409.0570}.

\bibitem{BertolaCafasso}
Bertola M., Cafasso M., Fredholm determinants and pole-free solutions to the
 noncommutative {P}ainlev\'e~{II} equation, \href{https://doi.org/10.1007/s00220-011-1383-x}{\textit{Comm. Math. Phys.}}
 \textbf{309} (2012), 793--833, \href{https://arxiv.org/abs/1101.3997}{arXiv:1101.3997}.

\bibitem{Caf2}
Cafasso M., Matrix biorthogonal polynomials on the unit circle and non-abelian
 {A}blowitz--{L}adik hierarchy, \href{https://doi.org/10.1088/1751-8113/42/36/365211}{\textit{J.~Phys.~A: Math. Theor.}} \textbf{42}
 (2009), 365211, 20~pages, \href{https://arxiv.org/abs/0804.3572}{arXiv:0804.3572}.

\bibitem{CafassodelaIglesia}
Cafasso M., de~la Iglesia M.D., Non-commutative {P}ainlev\'e equations and
 {H}ermite-type matrix orthogonal polynomials, \href{https://doi.org/10.1007/s00220-013-1853-4}{\textit{Comm. Math. Phys.}}
 \textbf{326} (2014), 559--583, \href{https://arxiv.org/abs/1301.2116}{arXiv:1301.2116}.

\bibitem{CassaM}
Cassatella-Contra G.A., Ma\~nas M., Riemann--{H}ilbert problems, matrix
 orthogonal polynomials and discrete matrix equations with singularity
 confinement, \href{https://doi.org/10.1111/j.1467-9590.2011.00541.x}{\textit{Stud. Appl. Math.}} \textbf{128} (2012), 252--274,
 \href{https://arxiv.org/abs/1106.0036}{arXiv:1106.0036}.

\bibitem{ChenIts}
Chen Y., Its A., Painlev\'e {III} and a singular linear statistics in
 {H}ermitian random matrix ensembles.~{I}, \href{https://doi.org/10.1016/j.jat.2009.05.005}{\textit{J.~Approx. Theory}}
 \textbf{162} (2010), 270--297, \href{https://arxiv.org/abs/0808.3590}{arXiv:0808.3590}.

\bibitem{DG1}
Dur\'an A.J., Gr\"unbaum F.A., Orthogonal matrix polynomials satisfying
 second-order differential equations, \href{https://doi.org/10.1155/S1073792804132583}{\textit{Int. Math. Res. Not.}}
 \textbf{2004} (2004), 461--484.

\bibitem{DL}
Dur\'an A.J., L\'opez-Rodr\'{\i}guez P., Structural formulas for orthogonal
 matrix polynomials satisfying second-order differential equations.~{II},
 \href{https://doi.org/10.1007/s00365-006-0637-4}{\textit{Constr. Approx.}} \textbf{26} (2007), 29--47.

\bibitem{Erd}
Erd\'elyi A., \"Uber einige bestimmte {I}ntegrale, in denen die
 {W}hittakerschen {$M^{k,m}$}-{F}unktionen auftreten, \href{https://doi.org/10.1007/BF01218891}{\textit{Math.~Z.}}
 \textbf{40} (1936), 693--702.

\bibitem{EGR}
Etingof P., Gelfand I., Retakh V., Nonabelian integrable systems,
 quasideterminants, and {M}archenko lemma, \href{https://doi.org/10.4310/MRL.1998.v5.n1.a1}{\textit{Math. Res. Lett.}}
 \textbf{5} (1998), 1--12, \href{https://arxiv.org/abs/q-alg/9707017}{q-alg/9707017}.

\bibitem{FIK}
Fokas A.S., Its A.R., Kitaev A.V., The isomonodromy approach to matrix models
 in {$2$}{D} quantum gravity, \href{https://doi.org/10.1007/BF02096594}{\textit{Comm. Math. Phys.}} \textbf{147} (1992),
 395--430.

\bibitem{OPRH}
Gr\"unbaum F.A., de~la Iglesia M.D., Mart\'{\i}nez-Finkelshtein A., Properties
 of matrix orthogonal polynomials via their {R}iemann--{H}ilbert
 characterization, \href{https://doi.org/10.3842/SIGMA.2011.098}{\textit{SIGMA}} \textbf{7} (2011), 098, 31~pages,
 \href{https://arxiv.org/abs/1106.1307}{arXiv:1106.1307}.

\bibitem{Joh}
Johansson K., Discrete orthogonal polynomial ensembles and the {P}lancherel
 measure, \href{https://doi.org/10.2307/2661375}{\textit{Ann. of Math.}} \textbf{153} (2001), 259--296,
 \href{https://arxiv.org/abs/math.CO/9906120}{math.CO/9906120}.

\bibitem{M1}
Miranian L., Matrix-valued orthogonal polynomials on the real line: some
 extensions of the classical theory, \href{https://doi.org/10.1088/0305-4470/38/25/009}{\textit{J.~Phys.~A: Math. Gen.}}
 \textbf{38} (2005), 5731--5749.

\bibitem{Moser}
Moser J., Finitely many mass points on the line under the influence of an
 exponential potential~-- an integrable system, in Dynamical Systems, Theory
 and Applications ({R}encontres, {B}attelle {R}es. {I}nst., {S}eattle,
 {W}ash., 1974), \href{https://doi.org/10.1007/3-540-07171-7_12}{\textit{Lecture Notes in Phys.}}, Vol.~38, Springer, Berlin,
 1975, 467--497.

\bibitem{RR}
Retakh V., Rubtsov V., Noncommutative {T}oda chains, {H}ankel quasideterminants
 and the {P}ainlev\'e~{II} equation, \href{https://doi.org/10.1088/1751-8113/43/50/505204}{\textit{J.~Phys.~A: Math. Theor.}}
 \textbf{43} (2010), 505204, 13~pages, \href{https://arxiv.org/abs/1007.4168}{arXiv:1007.4168}.

\end{thebibliography}
\end{document}